\titleformat*{\section}{\bfseries}
\titleformat*{\subsection}{\bfseries}
\titleformat*{\subsubsection}{\bfseries}
\titleformat*{\paragraph}{\bfseries}
\titleformat*{\subparagraph}{\bfseries}
\def\T{{ \mathrm{\scriptscriptstyle T} }}
\def\T{{ \mathrm{\scriptscriptstyle T} }}
\newcommand{\eff}{\operatorname{eff}}
\newcommand{\E}{E}
\newtheorem{assumption}{Assumption}
\newtheorem{definition}{Definition}
\newtheorem{theorem}{Theorem}
\newtheorem{lemma}{Lemma}
\newtheorem{algo}{Algorithm}
\newtheorem{proposition}{Proposition}
\title{ \Large Designs for generalized linear models with\\ random block effects via information matrix approximations}
\author{ \normalsize T. W. Waite\footnote{t.w.waite@southampton.ac.uk} \, and D. C. Woods\footnote{d.woods@southampton.ac.uk}
\\  \normalsize Southampton Statistical Sciences Research Institute,\\  \normalsize University of Southampton, SO17 1BJ, U.K. }
\date{}
\begin{document}

 \markboth{T. W. Waite and D. C. Woods}{Designs for generalized linear models with random block effects}

 \maketitle
 
 \begin{abstract}


The selection of optimal designs for generalized linear mixed models is complicated by the fact that the Fisher information matrix, on which most optimality criteria depend, is computationally expensive to evaluate. Our focus is on the design of experiments for likelihood estimation of parameters in the conditional model. We provide two novel approximations that substantially reduce the computational cost of evaluating the information matrix by complete enumeration of response outcomes, or Monte Carlo approximations thereof: (i) an asymptotic approximation which is accurate when there is strong dependence between observations in the same block; (ii) an approximation via Kriging interpolators. For logistic random intercept models, we show how interpolation can be especially effective for finding pseudo-Bayesian designs that incorporate uncertainty in the values of the model parameters. The new results are used to provide the first evaluation of the efficiency, for estimating conditional models, of optimal designs from closed-form approximations to the information matrix derived from marginal models. It is found that correcting for the marginal attenuation of parameters in binary-response models yields much improved designs, typically with very high efficiencies. However, in some experiments exhibiting strong dependence, designs for marginal models may still be inefficient for conditional modelling. Our asymptotic results provide some theoretical insights into why such inefficiencies occur.\\[1ex]
\emph{Key words:} Bayesian design; Binary response; Blocked experiment; Count response; Generalized linear mixed model; Kriging; Outcome-enumeration; Quasi-likelihood.

\end{abstract}

\section{Introduction}

There is increasing recognition of the need to design experiments in situations where a linear model with only fixed effects cannot adequately capture the essential features of the data. In particular, there is a growing body of work on optimal design for generalized linear models \citep[for example,{}][]{chaloner-larntz,woods06,yang11} which can be used when the response variable follows a non-normal distribution from the exponential family. An even more substantial literature addresses the problem of optimal design when there is heterogeneity between blocks in an experiment, using a linear mixed model for normally distributed responses with random block effects (for example, \citealp{cheng} and \citealp{goos-vdb}). In many practical contexts, such as the industrial experiment described by \citet{woods-vdv}, both of these features (non-normality and heterogeneity) are present. For such experiments, particularly where the response is discrete, for example~binary or count data, generalized linear mixed models may be an appropriate modelling choice. In this paper, we develop and compare optimal design methodologies for this family of models.


We find $D$-optimal designs, that is, designs that minimize the volume of the asymptotic confidence ellipsoid for the model parameters by maximizing the determinant of the Fisher information matrix. Dependence of an optimal design on the unknown values of the parameters is addressed using a pseudo-Bayesian approach. The key technical difficulty in the construction of $D$-optimal designs for generalized linear mixed models is that the information matrix is computationally expensive to evaluate. 

The adoption of a mixed model implies a marginal distribution for the response with intra-block correlation: two responses from units within the same block are correlated and responses from units in different blocks are uncorrelated. Existing approaches to optimal design for correlated discrete responses (see, for example, \citealp{moerbeek05}, \citealp{niaparast} and \citealp{woods-vdv}) are tailored for inferential methods, such as quasi-likelihood and generalized estimating equations, that use only the first and second order moments of the marginal distribution, and approximations thereof, for parameter estimation. Our focus is on design for direct (likelihood) estimation of parameters in the conditional model for which we present novel asymptotic and computational approximations to the information matrix. For binary data, we also adapt and extend marginal approximations to provide more efficient designs for the conditional model. We compare designs from the various approximations to those found from computationally expensive ``gold standard'' approximations throughout our examples, using either na\"ive outcome-enumeration or Monte Carlo methods (Section~\ref{sec:complenum}).

\section{Preliminaries}
\subsection{Generalized linear mixed models for blocked experiments}
\label{sec:model}
We denote the response for the $j$th unit in the $i$th block by $y_{ij}$, and the corresponding treatment vector of values taken by the $q$ controllable variables by $x_{ij} \in \mathcal{X}\subseteq\mathbb{R}^q$ ($i=1,\ldots,n$, $j=1,\ldots,m_i$). Also let $\zeta_i=(x_{i1},\ldots,x_{im_i})\in \mathcal{X}^{m_i}$ denote the $m_i$ treatment vectors in the $i$th block. Then, for a generalized linear mixed model, there is a vector, $u_i$, of $r$ random effects associated with the $i$th block in the experiment. Conditional on $u_i$, the responses in block $i$ are independent and follow an exponential family distribution, $y_{ij} | u_i \sim \pi(x_{ij};u_i,\beta)$, with mean $\mu_{ij}=\mu(x_{ij};u_i,\beta)$ and variance $\varphi v(x_{ij};u_i, \beta)$. For the models we consider, the dispersion parameter $\varphi=1$.  The mean function $\mu(x;u,\beta)$ is defined by
\begin{alignat}{2}
g\{\mu(x;u,\beta)\} &= \nu (x;u,\beta) \,, &\hspace{0.5cm}
\nu(x;u,\beta) &= f^{\T}(x) \beta + z^\T(x) u\,,
\label{eq:link}
\end{alignat}
\noindent where $f : \mathcal{X} \to \mathbb{R}^p$ and $z : \mathcal{X} \to \mathbb{R}^{r}$ are known vectors of regressor functions, with $z$ typically a subvector of $f$, and $\beta$ is a $p$-vector of fixed regression parameters. The link function $g$ relates the linear predictor $\nu$ to the mean response. Denote by $h$ the inverse link function, $g^{-1}$. To fully determine the model, assumptions about the distribution of $u_i$ are necessary; we specify independent $u_i \sim \mbox{\textsc{mvn}}(0, G)$, with $G$ an arbitrary covariance matrix. The presence of random effects in the linear predictor introduces a correlation between observations from experimental units in the same block. In this paper, we focus principally on random intercept models appropriate for blocked experiments, where $r=1$, $G=(\sigma^2)$, $\sigma^2>0$, and $z(x)=(1)$.

We assume for simplicity that $m_i=m$, and say treatment blocks $\zeta_1,\zeta_2\in\mathcal{X}^m$ are \emph{equivalent} if one can be obtained from the other by treatment permutation. Without loss of generality, we may assume that the blocks are ordered so that the design is supported on $\zeta_1,\ldots,\zeta_b$ ($1\leq b \leq n$), and that no pair from among the first $b$ blocks is equivalent. 
Let $w_k$, $ k =1,\ldots, b$, be the proportion of blocks equivalent to the $k$th support block $\zeta_k$, then we have the following concise notation for a design:
\begin{equation}
\xi = \left\{
\begin{array}{ccc}
\zeta_1, &\cdots, &\zeta_b \\
w_1, &\cdots, & w_b
\end{array}
\right\}\,,
\label{eq:general-design}
\end{equation}
\noindent where $0 < w_k \leq 1$ and $\sum_{k=1}^{b}w_k =1$. As defined above, $nw_k$ is a positive integer. We focus on approximate block designs, which relax this constraint \citep[see also][]{cheng}. Note that we do not impose any restrictions on the form of $\zeta_k$, and that designs may differ in the $w_k$, the $\zeta_k$ and the value of $b$. We restrict to designs with finite support, $b\le n<\infty$.


\subsection{Information matrix}
\label{sec:complenum}

Let $\theta$ denote the complete vector of parameters for model~(\ref{eq:link}). Thus $\theta$ includes the fixed effects parameters $\beta$ as well as any parameters specifying the distribution of $u_i$. Denote by $M_{\beta}$ the information matrix for $\beta$, holding all other components of $\theta$ fixed. The use of $M_{\beta}$ is appropriate for assessing the precision of a maximum likelihood estimator $\hat{\beta}$ assuming known variance components. In common with many papers on design for both linear mixed models \citep{cheng, goos-vdb} and specific examples of their generalized counterparts \citep{moerbeek05,tekle,niaparast-schwabe}, we do not consider the additional variability in $\hat{\beta}$ introduced when the variance components also require estimation.

For the approximate block design $\xi$ in~(\ref{eq:general-design}), the information matrix $M_{\beta}$ depends on $\theta$ and, as observations in different blocks are independent, can be decomposed into a weighted sum of the information matrices for each support block,
\begin{equation}
 M_{\beta}(\xi,\theta) =  \sum_{k=1}^{b} w_k M_{\beta}(\zeta_k,\theta) \,.
\label{eq:infdesign}
\end{equation}


The information matrix for an arbitrary block $\zeta=(x_1 ,\ldots, x_m) \in \mathcal{X}^m$ is
\begin{equation} M_{\beta}(\zeta,\theta) = F^\T E_{Y}    
					\left\{ 	
							P(Y|\theta, \zeta)^{-2}
								\left(
								 \frac{\partial P(Y|\theta, \zeta)}
								 	{\partial \eta}
								\right)
								\left(
								 \frac{ \partial P(Y|\theta, \zeta)}
								 	{\partial \eta}
								\right)^\T
				         \right\} F \,,
\label{eq:info}
\end{equation}
\noindent where $Y=(y_1,\ldots, y_m)^\T$ denotes the response vector or outcome corresponding to $\zeta$, $P(Y|\theta,\zeta)$ is the marginal likelihood of the model parameters, $\eta = (f^\T(x_1)\beta , \ldots, f^\T(x_m)\beta)^\T$, and $F=[f(x_1), \ldots, f(x_m)]^\T$ is the model matrix. The likelihood and its derivative are of the form
\begin{equation}
P(Y|\theta,\zeta) = \int_{\mathbb{R}^r} 
						P(Y|u,\theta,\zeta) 
							f_u(u) du \,,
	\quad
	\frac{\partial P(Y|\theta, \zeta)}
	{\partial \eta}
	= \int_{\mathbb{R}^r} 
		\frac{\partial P(Y|u,\theta,\zeta)}
			{\partial \eta } 
			f_u(u) du\,,
\label{eq:marglik}
\end{equation}
where $P(Y|u,\theta,\zeta)$ is the (exponential family) conditional probability density of $Y$ given $u$ and $f_u$ is the density function of an $\textsc{mvn}(0,G)$ random variable. Typically a closed form for the partial derivative of the conditional density is available. For random intercept models, the integrals in~\eqref{eq:marglik} can be evaluated numerically using Gauss-Hermite quadrature. 

For models with binary response, the expectation in~\eqref{eq:info} can be evaluated by enumeration of outcomes $Y\in \{0,1\}^m$. Expanding the expectation, we obtain
\begin{equation}
M_{\beta}(\zeta,\theta) =  F^\T\sum_{Y \in \{0,1\}^m} 
				 P(Y|\theta, \zeta)^{-1}
								\left(
								 \frac{\partial P(Y|\theta, \zeta)}
								 	{\partial \eta}
								\right)
								\left(
								 \frac{ \partial P(Y|\theta, \zeta)}
								 	{\partial \eta}
								\right)^\T  F\,,
\label{eq:blockinf}
\end{equation}
\noindent where the sum is over all possible response patterns in block $\zeta$. An obvious approximation to information matrix~\eqref{eq:info} is via~\eqref{eq:blockinf} with numerical approximation of~\eqref{eq:marglik} using quadrature. We call this approach \textit{na\"ive outcome-enumeration}. Clearly, for even moderately sized blocks, such an approximation will be computationally expensive.


For other response distributions, such as Poisson, the expectation in \eqref{eq:info} can be approximated, in principle, by Monte Carlo sampling of response vectors $Y$. In practice, to obtain reasonable precision in the approximation of the information matrix using this method, it is necessary to consider many more than the $2^m$ possible distinct outcomes obtained from a binary model. 

\subsection{Optimality criteria}
\label{sec:opt-crit}
We study both locally $D$-optimal designs, i.e. $\xi^{\ast}_D = \operatorname*{arg\,max}_{\xi} |M_\beta(\xi , \theta)|$ for an assumed value of $\theta$, and \mbox{(pseudo-)}Bayesian designs. From~\eqref{eq:infdesign} and an application of Caratheodory's theorem (e.g. \citealp[][p.16]{silvey}), it follows that there is always a locally $D$-optimal design supported on at most $p(p+1)/2 +1$ distinct blocks. The pseudo-Bayesian approach may be used to construct a design that is more robust to misspecification of the model parameters, and requires specification of a prior distribution, $\Lambda$, for $\theta$. Given $\Lambda$, $\xi$ is Bayesian $D$-optimal if it maximizes 
$
\psi(\xi) =E_\theta\{\log  |M_\beta (\xi,\theta)|\}
$ \citep{chaloner-larntz}.
 We do not assume that the resulting analysis will be Bayesian, or that it will use prior distribution $\Lambda$. Care must be taken when the prior distribution has unbounded  support; see \citet{waite2013integrability}.


\section{Approximations via marginal models}
\subsection{Marginal quasi-likelihood}
\label{sec:mql}

\citet{bres-clay} discussed marginal quasi-likelihood as a computationally inexpensive, approximate method for estimating the parameters of a generalized linear mixed model. The method is indirect in that it applies standard quasi-likelihood equations for dependent data \citep[][Sec. 9$\cdot$3]{mccullagh-nelder} to a linearization of the model about the mean value of the random effects. An information matrix approximation corresponding to this method is
\[
M_\beta^{\text{marg}}(\xi,\theta) =  \sum_{k=1}^{b}
            	w_k \,
			F^\T_kV_k^{-1}F_k \,,
	\label{eq:approx-info-matrix}
\]
where $F_k$ is the model matrix for $\zeta_k$, $V_k=\mathcal{V}(\zeta_k,\theta)$ is determined from $\mathcal{V}(\zeta,\theta) = W(\zeta, \theta)^{-1} + Z(\zeta) G Z(\zeta)^\T$, $W(\zeta,\theta)$ is the diagonal matrix with entries $v(x_{1}; 0,\beta),\ldots,v(x_{m} ; 0,\beta)$, and $Z(\zeta)= [z(x_{1}),	\ldots, z(x_{m})	]^\T$. For design using similar methods, see \citet{moerbeek05}.


There are several higher-order marginal quasi-likelihood approximations in the literature, for example \citet{gold-ras}. An approximation to the information matrix using a second order method was derived in a 2012 University of Southampton PhD thesis by T.~W. Waite. Use of this approximation does not result in better designs, so we omit the results here. The marginal quasi-likelihood approximation is similar to the first-order approximations used in the design of pharmacokinetic studies (see, for example, \citealp{retout2003further}).

\subsection{Generalized estimating equations}
\label{sec:gee}

Generalized estimating equations \citep{liang-zeger} may be used to estimate parameters when the marginal distribution of the response follows a generalized linear model, making use of a `working correlation' matrix that need not be equal to the true correlation matrix. Typically, a standard structure is used for the working correlation, such as exchangeable, autoregressive or nearest neighbour. However these assumptions are incompatible with most known probability models for dependent discrete responses, in which the correlation is a nontrivial function of the treatments and parameters. Indeed there may not exist any probability model achieving these simple correlation structures with the required univariate marginal distributions if, for example, the working correlation violates the bounds on correlation for binary data \citep[Ch.7]{joe}. Nonetheless, the estimators retain consistency under misspecification of the correlation structure and may still be highly efficient \citep{chaganty}. Note that here we use generalized estimating equations only to obtain an approximation to the mixed model information matrix.

\citet{woods-vdv} found designs for marginal generalized linear models that are $D$-optimal for the generalized estimating equation method under the assumption that the true correlation structure corresponds to a specified working correlation structure. They also found that the resulting designs were robust to a general class of departures from this correlation assumption. Denote the parameters of the assumed marginal model by $\beta^\ast$, the correlation parameter by $\rho$, and assume the marginal model has the same link and variance functions as the conditional model. Then for exchangeable correlation, the inverse asymptotic covariance matrix is
\[
M_\beta^{\text{gen}}(\xi,\beta^\ast, \rho) = \sum_{k=1}^{b} w_k  F_k^\T D_k\{ (V^\ast_k)^{1/2} R(\rho) (V^\ast_k)^{1/2}\}^{-1} D_k F_k \,,
\]
where $D_k$ is the diagonal matrix with entries $1/g'(\mu^\ast_{k1}),\ldots, 1/g'(\mu^\ast_{km})$, $\mu^\ast_{kj}= h\{ f^\T(x_{kj})\beta^\ast \}$, $V^\ast_k$ is the diagonal matrix with entries $v(x_{k1};0, \beta^\ast), \ldots, v(x_{km};0,\beta^\ast)$, and $R(\rho) = (1-\rho)I_m + \rho 1_m 1_m^\T$, with $I_m$ the $m\times m$ identity matrix and $1_m$ an $m$-vector of ones. 

%
\subsection{Binary response: adjustment for attenuation of parameters}
\label{sec:amql}

Use of marginal quasi-likelihood for the logistic random effects model results in the assumption that the marginal mean has the form $E(y_{ij})\approx g^{-1} \{  f^\T(x_{ij}) \beta \}$ \citep{bres-clay}. \citet{ZLA-GEE} showed that a better approximation to the marginal mean is given by a logistic relationship with attenuated coefficients,
\begin{alignat}{2}
E( y_{ij} ) \approx 
		g^{-1}
				\left\{
					f^\T(x_{ij})\beta/\sqrt{1+c^2 z(x_{ij})^\T G z(x_{ij}) }
					\right\}\,,
\label{eq:general-att}
\end{alignat}
\noindent where $c=16\sqrt{3}/(15\pi)$. For random intercept models this reduces to 
\begin{alignat}{2}
E(y_{ij}) \approx g^{-1}\{ f^\T(x_{ij}) \beta_\text{att}\} \,, & \quad \beta_\text{att} = \beta\, (1+c^2 \sigma^2)^{-1/2} \,.
\label{eq:marg-approx}
\end{alignat}
This suggests that for the logistic random intercept model, more efficient designs might be obtained by adjusting the parameter values to better approximate the marginal mean using~(\ref{eq:marg-approx}). Explicitly, we define the \textit{adjusted marginal quasi-likelihood} information matrix by
\begin{alignat*}{2}
M_\beta^{\text{adj}}(\xi, \theta) = M_\beta^{\text{marg}}(\xi, \theta_{\text{adj}}) \,,& \hspace{0.5cm}
\theta_{\text{adj}} = \left(
						\beta_\text{att}^\T
							, 
						\sigma^2
					\right)^\T \,,
\end{alignat*}
\noindent where $M_\beta^\text{marg}$ is the information matrix for $\beta$ under marginal quasi-likelihood. In models other than the random intercept the attenuation factor depends on $x$, so a constant adjustment cannot be applied for every design point. However, one possibility for a similar approximation may be to apply quasi-likelihood or generalized estimating equations using~\eqref{eq:general-att} as the marginal mean.


To extend the methods of \citet{woods-vdv}, we also take account of parameter attenuation by forming the \textit{adjusted generalized estimating equation} approximation,
\begin{equation}\label{eq:GEE}
M_\beta^{\text{adj. gen.}}(\xi, \theta, \rho) = M_\beta^{\text{gen.}}(\xi, \beta_\text{att}, \rho) \,.
\end{equation}
Here we either choose a value of $\rho$ following the guidelines laid out, for estimation, by \citet{chaganty}, or treat $\rho$ as a tuning parameter, i.e.~we choose the value of $\rho$ such that the corresponding $D$-optimal design using~\eqref{eq:GEE} maximizes $|M_\beta|$ approximated via na\"ive outcome-enumeration.


\section{Theoretical and computational direct approximations for the logistic\\ random intercept model}
\subsection{Asymptotic outcome-enumeration}
\label{sec:strongdep}

For the logistic random intercept model, the important case of large $\sigma^2$ results in substantial block-to-block variability and poses a more difficult design problem. In this case, responses in the same block are strongly dependent, and the adjusted marginal quasi-likelihood and adjusted generalized estimating equation designs may perform quite poorly (see Section~\ref{sec:ex-local}). Moreover, for large $\sigma^2$ the na\"ive outcome-enumeration approximation becomes even more computationally expensive, as more quadrature points are required to maintain accuracy in the approximation of the integrals in~\eqref{eq:marglik}. In this section, we develop asymptotic, $\sigma^2\to \infty$, expressions for $P(Y|\theta,\zeta)$ and its derivatives which are combined with~\eqref{eq:blockinf} to provide a new, direct approximation to the information matrix for large finite $\sigma^2$. The additional approximation enables selection of efficient designs for large $\sigma^2$ at low computational cost compared to na\"ive outcome-enumeration (Section~\ref{sec:ex-local}). Our main results are in Theorems~\ref{thm:asymp-prob-quasi-incr}--\ref{thm:quasi-only}; first we define some necessary assumptions and notation. 

For fixed values of the conditional parameters, the `marginal effects' in $\beta_\text{att}$ attenuate to zero as $\sigma^2\to\infty$. In order to approximate the more interesting and realistic case where both $\sigma^2$ is large and there are non-zero marginal effects, we assume the following asymptotic conditions.

\begin{assumption} $\beta_\text{att}= \beta/ \sqrt{1+c^2 \sigma^2}$ is fixed.
\end{assumption}
\begin{assumption}
 For each $j$, either $\eta^\ast_j= f^\T(x_j)\beta_\text{att}$ is fixed or there exists $l\neq j$ such that $\eta^\ast_l$ is fixed and $\eta^\ast_l-\eta^\ast_j =o(\sigma^{-1})$.
 \end{assumption}
  
 In order to meet these conditions, we allow the $x_j$ to vary with $\sigma^2$. A simple asymptotic approximation to the information matrix could be derived by treating all $\eta^\ast_j$ as distinct and fixed as $\sigma^2\to \infty$. However, such an approximation would be very poor for designs with $\eta^\ast_l \approx \eta^\ast_j$ for some $l\neq j$. Our novel asymptotic framework allows consideration of the case where there is near-replication of linear predictor values in a block.

Assumptions~1 and~2 allow the partition of $\mathcal{S}=\{ 1,\ldots,m \}$ as $\mathcal{N}(j)\cup \mathcal{Z}(j) \cup \mathcal{P}(j)$ for each $j=1,\ldots,m$, where $\mathcal{N}(j) = \{ l : \eta_l-\eta_j \to -\infty \}$, $\mathcal{Z}(j) = \{ l : \eta_l - \eta_j \to 0 \}$, and $\mathcal{P}(j)=\{ l : \eta_l - \eta_j \to \infty\}$. Intuitively, $\mathcal{N}(j)$, $\mathcal{Z}(j)$, $\mathcal{P}(j)$ are the respective sets of indices of linear predictors less than, similar to, and greater than $\eta_j$. The limiting expressions we develop for $P(Y|\theta,\zeta)$ and $\partial P(Y|\theta,\zeta)/\partial \eta_j$ depend on which elements of $\mathcal{S}$ belong to $\mathcal{N}$, $\mathcal{Z}$ and $\mathcal{P}$. 

It will be useful to identify some particular classes of outcomes.
\begin{definition}
Outcome $Y=(y_1,\ldots,y_m)^\T$ is \emph{increasing} (within the block) if there exists $j'\in\mathcal{S}$ such that $y_l=0$ when $\eta_l-\eta_{j'}<0$ and $y_l=1$ when $\eta_l-\eta_{j'}>0$.
\end{definition}
\begin{definition}
Outcome $Y$ is \emph{quasi-increasing} if there exists $j'\in\mathcal{S}$ such that $\mathcal{N}(j') \subseteq \mathcal{S}_0$ and $\mathcal{P}(j')\subseteq \mathcal{S}_1$,  where $\mathcal{S}_0 = \{ j : y_j=0 \}$ and $\mathcal{S}_1 = \{ j : y_j=1\}$,
 or, equivalently, if $\{ \mathcal{S}_0 \cap \mathcal{P}(j')\} \cup \{ \mathcal{S}_1 \cap \mathcal{N}(j')\}=\emptyset$.
\end{definition}
Any outcome that is increasing (with the same $j'$) for all $\sigma^2$ is clearly also quasi-increasing.   

We now make a further assumption necessary for our theorems.



\begin{assumption}
There exists $A_j, B_j>0$ such that $|\eta_l -\eta_j| > \sigma A_j$ for $l \in \{\mathcal{S}_0\cap \mathcal{N}(j)\} \cup \{\mathcal{S}_1 \cap \mathcal{P}(j)\}$ and $|\eta_l - \eta_j| > \sigma B_j$ for all $l \in \{ \mathcal{S}_1 \cap \mathcal{N}(j) \}\cup\{  \mathcal{S}_0 \cap \mathcal{P}(j)  \}$. 
\end{assumption}
This condition holds for large $\sigma^2$ by Assumptions 1 and 2; it implies that pairs of predictors which diverge asymptotically are at least $\min_{j=1,\ldots,m}\{A_j,B_j\}\sigma$ apart. 

\begin{theorem}[Approximation of the likelihood]
\label{thm:asymp-prob-quasi-incr}
Suppose that the outcome is quasi-increasing. Then there exists $j'\in\mathcal{S}$ such that $\{\mathcal{S}_0 \cap \mathcal{P}(j')\} \cup \{\mathcal{S}_1 \cap \mathcal{N}(j')\} = \emptyset$, and:\\
(i) If $|\mathcal{S}_0 \cap \mathcal{Z}(j')|=0$ or $|\mathcal{S}_1 \cap \mathcal{Z}(j')|=0$, the outcome is increasing and, as $\sigma^2\to \infty$,
 \begin{equation}\label{increasinglik}
P(Y|\theta,\zeta) = \max\left\{0,
		\Phi\left(-\max_{j \in \mathcal{S}_0} \{ \eta_j/\sigma \}\right) - \Phi\left(-\min_{j \in \mathcal{S}_1}\{\eta_j/\sigma\}\right) 
		\right\} + O(\sigma^{-1})  \,.
\end{equation}
\\
(ii) If  $|\mathcal{S}_0 \cap \mathcal{Z}(j')|\geq 1 $ and $|\mathcal{S}_1 \cap \mathcal{Z}(j')|\geq 1$, then as $\sigma^2 \to \infty$,
\begin{align}
P(Y|\theta, \zeta) &= \frac{\phi(\eta_{j'}/\sigma)}{\sigma}
			\int_{-\infty}^{\infty} 
			\{1-h(t)\}^{|\mathcal{S}_0 \cap \mathcal{Z}(j')|} 
			h(t)^{|\mathcal{S}_1 \cap \mathcal{Z}(j')|} 
			dt \nonumber\\ 
			& \quad\quad\quad\quad\quad\quad\quad+ \sum_{l\in\mathcal{Z}(j')}O(\Delta_{lj'}/\sigma) + O(\sigma^{-2}) \,,\label{qincreasinglik}
\end{align}
where $\Delta_{lj} = \eta_l - \eta_j$. The integral has value 1 when $|\mathcal{Z}(j')|=2$.
\end{theorem}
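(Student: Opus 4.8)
\emph{Proof strategy.} The plan is to reduce the marginal likelihood to a one-dimensional integral and perform a Laplace-type localization. Substituting $u=\sigma s$ in~\eqref{eq:marglik} gives $P(Y|\theta,\zeta)=\int_{-\infty}^{\infty}\ell(s)\phi(s)\,ds$, where $\ell(s)=\prod_{j=1}^{m}\ell_j(s)$ is the conditional likelihood as a function of the standardized random effect and $\ell_j(s)=h(\eta_j+\sigma s)^{y_j}\{1-h(\eta_j+\sigma s)\}^{1-y_j}$. Two elementary facts drive everything. First, $0\le\ell(s)\le1$ and $\log\ell$ is concave in $s$, being a sum of concave logistic log-terms, so $\ell$ is unimodal. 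Second, for the logistic $h$ one has $h(w)\le e^{-|w|}$ when $w<0$ and $1-h(w)\le e^{-|w|}$ when $w>0$, so each $\ell_j$ differs from a $0/1$ step at its threshold $s=-\eta_j/\sigma$ by an amount decaying exponentially at rate $\sigma$ away from it; Assumption~3 together with quasi-increasingness guarantees that, relative to the reference index $j'$, every threshold indexed by $\mathcal{N}(j')\cup\mathcal{P}(j')$ sits at distance at least $A_{j'}$ from $-\eta_{j'}/\sigma$ in the $s$-scale, which is what makes these exponential bounds bite.

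For part~(i) the hypotheses force the outcome to be increasing with threshold at the $\mathcal{Z}(j')$-cluster, so $\ell(s)\to\mathbf{1}\{P_1<s<P_0\}$ pointwise, where $P_0=-\max_{j\in\mathcal{S}_0}\eta_j/\sigma$ and $P_1=-\min_{j\in\mathcal{S}_1}\eta_j/\sigma$ (with $\max\emptyset=-\infty$, $\min\emptyset=+\infty$ covering the all-zero and all-one outcomes); dominated convergence then delivers the leading term $\Phi(P_0)-\Phi(P_1)$, and Assumption~3 shows $P_1<P_0$, so the $\max\{0,\cdot\}$ is vacuous under the hypotheses and present only for robustness. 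For the $O(\sigma^{-1})$ remainder I would write $P(Y|\theta,\zeta)-\{\Phi(P_0)-\Phi(P_1)\}=\int_{-\infty}^{P_1}\ell\phi+\int_{P_0}^{\infty}\ell\phi-\int_{P_1}^{P_0}(1-\ell)\phi$: on $(-\infty,P_1)$ and $(P_0,\infty)$ the critical $\mathcal{S}_1$- and $\mathcal{S}_0$-factor give $\ell(s)\le e^{-\sigma|s-P_1|}$ and $\ell(s)\le e^{-\sigma|s-P_0|}$ respectively, while on $(P_1,P_0)$ one bounds $1-\ell\le\sum_j(1-\ell_j)$ and integrates the exponential tail of each $\ell_j$; every piece is $O(\sigma^{-1})$ with constant depending only on $m$.

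For part~(ii) both $\mathcal{S}_0\cap\mathcal{Z}(j')$ and $\mathcal{S}_1\cap\mathcal{Z}(j')$ are non-empty, so $\ell$ is never close to $1$ and the mass concentrates in a window of width $O(\sigma^{-1})$ about $s^{*}=-\eta_{j'}/\sigma$. I would make the exact substitution $s=s^{*}+t/\sigma$, for which $\eta_l+\sigma s=\Delta_{lj'}+t$ and $P(Y|\theta,\zeta)=\sigma^{-1}\int_{-\infty}^{\infty}\ell(s^{*}+t/\sigma)\phi(s^{*}+t/\sigma)\,dt$. On $|t|\le\tfrac12\sigma A_{j'}$ the factors $\ell_l$ for $l\in\mathcal{N}(j')\subseteq\mathcal{S}_0$ and $l\in\mathcal{P}(j')\subseteq\mathcal{S}_1$ equal $1+O(\exp\{-\Omega(\sigma)\})$, the factors for $l\in\mathcal{Z}(j')$ equal $h(\Delta_{lj'}+t)^{y_l}\{1-h(\Delta_{lj'}+t)\}^{1-y_l}$, to be replaced by $h(t)^{y_l}\{1-h(t)\}^{1-y_l}$ at cost $\Delta_{lj'}h'(t)+O(\Delta_{lj'}^2)$ per factor, and $\phi(s^{*}+t/\sigma)$ by $\phi(\eta_{j'}/\sigma)$ at cost $O(|t|/\sigma)$. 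Since $\{1-h(t)\}^{|\mathcal{S}_0\cap\mathcal{Z}(j')|}h(t)^{|\mathcal{S}_1\cap\mathcal{Z}(j')|}$ and all the correction terms decay exponentially in $|t|$ — there is always a surviving $h$- and a surviving $(1-h)$-factor — the $t$-integral and its first moment converge, and reassembling with the Jacobian $\sigma^{-1}$ yields the stated formula with remainder $\sum_{l\in\mathcal{Z}(j')}O(\Delta_{lj'}/\sigma)+O(\sigma^{-2})$; the region $|t|>\tfrac12\sigma A_{j'}$ contributes only $\exp\{-\Omega(\sigma)\}$ because there the critical $\mathcal{Z}(j')\cap\mathcal{S}_0$- or $\mathcal{Z}(j')\cap\mathcal{S}_1$-factor is already at most $e^{-\sigma|s-s^{*}|/2}$. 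When $|\mathcal{Z}(j')|=2$ the two exponents equal $1$ and $\int_{-\infty}^{\infty}\{1-h(t)\}h(t)\,dt=\int_{-\infty}^{\infty}h'(t)\,dt=1$.

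The main obstacle is not any single inequality but the uniformity of the estimates: because the design points, hence the $\eta_j$, are allowed to drift with $\sigma^2$ (this is how Assumptions~1 and~2 are realized), every ``$O$'' above must be uniform in these drifting quantities, and the delicate step is to certify that the non-local part of the $s$-integral is genuinely negligible at the claimed rate rather than merely $o(1)$. I expect the log-concavity of $\ell$ and the uniform threshold separation furnished by Assumption~3 to be exactly the two ingredients that make this localization rigorous; what remains is the routine Taylor expansion of $h$ and $\phi$ together with the elementary logistic integrals.
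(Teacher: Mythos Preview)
Your proposal is correct and follows essentially the same route as the paper. Both arguments write $P(Y)=\int \ell(s)\phi(s)\,ds$, replace logistic factors by step functions in part~(i), and in part~(ii) perform the substitution $t=\eta_{j'}+\sigma s$ followed by Taylor expansion of the $\mathcal{Z}(j')$-factors in $\Delta_{lj'}$ and of $\phi$ in $t/\sigma$. The only cosmetic differences are that the paper packages the two key estimates---replacing $h$ by a step with $O(\sigma^{-1})$ error, and removing the $\mathcal{N}(j')\cup\mathcal{P}(j')$ factors with exponentially small error---into standalone Lemmas~1 and~2, whereas you bound these pieces directly by splitting the $s$-domain; and the paper does not invoke log-concavity of $\ell$ at all, since the elementary bounds $h(w),1-h(w)\le e^{-|w|}$ on the appropriate half-lines already suffice for the localization.
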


\begin{theorem}[Approximation of derivatives]
\label{thm:asymp-approx-deriv}
(i) Assume $j\in\mathcal{S}$ is such that  $\mathcal{N}(j) \subseteq \mathcal{S}_0$ and $\mathcal{P}(j)\subseteq \mathcal{S}_1$ which implies that the outcome is quasi-increasing. Then, for arbitrary $\epsilon >0$,
\begin{align}
(2y_j-1) \frac{\partial P(Y|\theta,\zeta)}
{\partial \eta_j} &=
					 \frac{1}{\sigma}\phi\left(\frac{-\eta_j}{\sigma}\right)
					 \Big\{
					  C^{(1)}_{I(j),J(j)}
					 + \sum_{l\in \mathcal{S}_1 \cap \mathcal{Z}(j)\backslash\{ j\}}
				\Delta_{lj}  C^{(3)}_{I(j)-1,J(j)}  \nonumber\\
				& \quad\quad\quad
				 	-  \sum_{l\in \mathcal{S}_0 \cap \mathcal{Z}(j)\backslash\{ j\}}
				\Delta_{lj} C^{(3)}_{I(j),J(j)-1} \Big\}
				+  \frac{1}{\sigma^2}\phi'\left(\frac{-\eta_j}{\sigma}\right) C^{(2)}_{I(j),J(j)} \nonumber
				\\[0.5ex]&\quad\quad\quad
					 %
					 + \sum_{l \in \mathcal{Z}(j)}O(\Delta_{lj}^2/\sigma) + O(\sigma^{-3}) +  O\left(\frac{1}{\sigma}e^{-\sigma A_j/[(1+\epsilon)m]}\right) \,, \label{qincreasingderiv}
\end{align}
where $I(j)=|\mathcal{S}_1\cap\mathcal{Z}(j)\backslash\{j\}|$, $J(j)=|\mathcal{S}_0\cap\mathcal{Z}(j)\backslash\{j\}|$ and, for integers $I,J\geq 0$,
\begin{alignat*}{2}
C^{(1)}_{I,J} =  \int_{-\infty}^{\infty} 
h'(t)  h(t)^{I}\{1-h(t)\}^{J} dt\,,  & \hspace{0.5cm}
C^{(2)}_{I,J}= \int_{-\infty}^{\infty}  t h'(t)  h(t)^{I}\{1-h(t)\}^{J} dt \\
C^{(3)}_{I,J}= \int_{-\infty}^{\infty} \{h'(t)\}^2  h(t)^{I}\{1-h(t)\}^{J} dt \,.
\end{alignat*}

(ii) If $j$ is such that $\mathcal{N}(j) \not\subseteq \mathcal{S}_0 $ or $\mathcal{P}(j) \not\subseteq \mathcal{S}_1 $, then the derivative satisfies
\begin{equation}
\frac{\partial P(Y|\theta,\zeta)}
{\partial \eta_j} = O\left(\frac{1}{\sigma}e^{-\sigma B_j/(1+\epsilon)}\right)  \,.
\label{eq:appx-deriv-fast-cvg}
\end{equation}
Moreover, if the outcome is neither increasing nor quasi-increasing, then~\eqref{eq:appx-deriv-fast-cvg} holds for all $j$.\end{theorem}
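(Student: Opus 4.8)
The plan is to derive both expansions from a single exact integral representation of the derivative, obtained by localizing the random effect around the value at which the $j$th logistic transition occurs. I would first write the conditional likelihood for the logistic random intercept model, $P(Y|u,\theta,\zeta)=\prod_{l=1}^m h(\eta_l+u)^{y_l}\{1-h(\eta_l+u)\}^{1-y_l}$, integrate it against the $N(0,\sigma^2)$ density of $u$, and differentiate under the integral sign, which is justified since $h$ is smooth with bounded, integrable derivatives. Only the $j$th factor depends on $\eta_j$, with $\partial/\partial\eta_j$ of that factor equal to $(2y_j-1)h'(\eta_j+u)$; substituting $t=\eta_j+u$ then gives
\[
(2y_j-1)\frac{\partial P(Y|\theta,\zeta)}{\partial\eta_j}=\frac{1}{\sigma}\int_{-\infty}^{\infty}h'(t)\,\phi\!\left(\frac{t-\eta_j}{\sigma}\right)\prod_{l\neq j}h(\Delta_{lj}+t)^{y_l}\{1-h(\Delta_{lj}+t)\}^{1-y_l}\,dt\,,
\]
the common starting point for both parts. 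Here $h'(t)$, and every product integrand of the form $h'(t)h(t)^{I}\{1-h(t)\}^{J}$, decays exponentially as $|t|\to\infty$, so the integral concentrates where $t=O(1)$.

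For part (i), under $\mathcal{N}(j)\subseteq\mathcal{S}_0$ and $\mathcal{P}(j)\subseteq\mathcal{S}_1$, I would split the integral at $|t|=\sigma A_j/[(1+\epsilon)m]$. On the tail the product is at most $1$ and $\int_{|t|>T}h'(t)\,dt=O(e^{-T})$, which gives the $O(\sigma^{-1}e^{-\sigma A_j/[(1+\epsilon)m]})$ remainder. On the central region I classify the remaining factors using Assumption~3: for $l\in\mathcal{N}(j)$ one has $y_l=0$ and $1-h(\Delta_{lj}+t)=1+O(e^{\Delta_{lj}+t})=1+O(e^{-c\sigma})$ uniformly, for $l\in\mathcal{P}(j)$ one has $y_l=1$ and $h(\Delta_{lj}+t)=1+O(e^{-c\sigma})$, and for $l\in\mathcal{Z}(j)\setminus\{j\}$ I Taylor-expand $h(\Delta_{lj}+t)=h(t)+\Delta_{lj}h'(t)+O(\Delta_{lj}^2)$, and likewise for $1-h$. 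Simultaneously I expand $\phi\{(t-\eta_j)/\sigma\}=\phi(-\eta_j/\sigma)+(t/\sigma)\phi'(-\eta_j/\sigma)+O(t^2/\sigma^2)$, and then extend the integration back to $\mathbb{R}$ at the cost of a further exponentially small term. Collecting contributions, the zeroth-order term is $\sigma^{-1}\phi(-\eta_j/\sigma)\int h'(t)h(t)^{I(j)}\{1-h(t)\}^{J(j)}dt=\sigma^{-1}\phi(-\eta_j/\sigma)C^{(1)}_{I(j),J(j)}$; replacing one $h(t)$ (respectively one $1-h(t)$) by its $\Delta_{lj}h'(t)$ correction, for each $l$ in $\mathcal{S}_1\cap\mathcal{Z}(j)\setminus\{j\}$ (respectively $\mathcal{S}_0\cap\mathcal{Z}(j)\setminus\{j\}$), contributes the $C^{(3)}$ sums, the extra $h'(t)$ squaring the one from the $j$th factor and lowering the corresponding exponent by one; and the $(t/\sigma)\phi'$ term contributes $\sigma^{-2}\phi'(-\eta_j/\sigma)C^{(2)}_{I(j),J(j)}$. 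All remaining pieces — second-order and pairwise-product corrections among the $\mathcal{Z}(j)$ factors, the $O(t^2/\sigma^2)$ term from $\phi$, and cross terms such as $O(\Delta_{lj}/\sigma^2)$ — are $O(\Delta_{lj}^2/\sigma)+O(\sigma^{-3})$ after integration, using $\Delta_{lj}\to0$ for $l\in\mathcal{Z}(j)$ and $\eta_j/\sigma=O(1)$. Adopting the convention that a $C^{(3)}$ with a negative subscript multiplies an empty sum and is therefore absent, this reproduces \eqref{qincreasingderiv}.

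For part (ii), suppose $\mathcal{N}(j)\not\subseteq\mathcal{S}_0$, so there is $l\in\mathcal{N}(j)\cap\mathcal{S}_1$; then $y_l=1$ and, by Assumption~3, $\Delta_{lj}<-\sigma B_j$. Using $h(x)\le e^x$, the $l$th factor in the representation above is at most $e^{\Delta_{lj}+t}\le e^{-\sigma B_j+t}$, while every other factor is at most $1$; bounding the resulting integral and splitting at $|t|\asymp\sigma$ to balance the central estimate $e^{-\sigma B_j+|t|}$ against the tail estimate $\int h'(t)\,dt=O(e^{-|t|})$ yields \eqref{eq:appx-deriv-fast-cvg}, the free parameter $\epsilon$ absorbing the constants (in fact, since $\int\phi\{(t-\eta_j)/\sigma\}\,dt=\sigma$ and $h'(t)e^t=h(t)^2\le1$, this already gives the slightly sharper $O(e^{-\sigma B_j})$). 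The symmetric argument using $1-h(x)\le e^{-x}$ covers $\mathcal{P}(j)\not\subseteq\mathcal{S}_1$. Finally, if the outcome is neither increasing nor quasi-increasing then, by the definition of quasi-increasing, no index satisfies $\mathcal{N}(\cdot)\subseteq\mathcal{S}_0$ and $\mathcal{P}(\cdot)\subseteq\mathcal{S}_1$, so every $j$ falls under the case just treated and \eqref{eq:appx-deriv-fast-cvg} holds for all $j$.

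The identification of the leading-order expressions is essentially the computation above; the main obstacle, as for Theorem~\ref{thm:asymp-prob-quasi-incr}, is the uniform control of the remainders — showing that the localization, the Taylor truncations, the extension of the integration domain, and the neglect of the diverging factors all contribute errors at precisely the stated orders, uniformly over the $2^m$ outcomes and over $j$, and in particular that the exponentially small terms carry the claimed rates.
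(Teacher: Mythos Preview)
Your proposal is correct and follows essentially the same route as the paper: the identical integral representation after the substitution $t=\eta_j+u$, Taylor expansion of $\phi$ in $t/\sigma$, Taylor expansion of the $\mathcal{Z}(j)$ factors in $\Delta_{lj}$, and exponential control of the diverging factors. The only differences are organizational: the paper packages the ``far factors $\to 1$'' step into a separate Lemma~1 (proved by a step-function comparison and the same domain-splitting you perform inline), and for part~(ii) the paper invokes an $f_1$-analogue of that lemma whereas you bound the offending factor directly via $h(x)\le e^x$, which in fact yields the slightly sharper $O(e^{-\sigma B_j})$ and so implies the stated $O(\sigma^{-1}e^{-\sigma B_j/(1+\epsilon)})$.
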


\begin{theorem}[Importance of quasi-increasing outcomes]\label{thm:quasi-only}
Quasi-increasing outcomes contribute terms of order $O(\sigma^{-1})$ or $O(\sigma^{-2})$ to the information matrix in~\eqref{eq:blockinf}. Outcomes that are not quasi-increasing contribute terms of order $O\left(\frac{1}{\sigma}e^{-\sigma \min_j B_j/(1+\epsilon)}\right)$ which are asymptotically negligible. 
\end{theorem}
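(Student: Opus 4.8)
\textbf{Proof proposal for Theorem~\ref{thm:quasi-only}.}

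The plan is to combine the pointwise estimates of Theorems~\ref{thm:asymp-prob-quasi-incr} and~\ref{thm:asymp-approx-deriv} and feed them into the finite sum~\eqref{eq:blockinf}. First I would handle the easy half of the statement. If an outcome $Y$ is not quasi-increasing, then by the last sentence of Theorem~\ref{thm:asymp-approx-deriv}(ii) we have $\partial P(Y|\theta,\zeta)/\partial \eta_j = O(\sigma^{-1}e^{-\sigma B_j/(1+\epsilon)})$ for \emph{every} $j=1,\ldots,m$, so the whole vector $\partial P(Y|\theta,\zeta)/\partial\eta$ has Euclidean norm $O(\sigma^{-1}e^{-\sigma \min_j B_j/(1+\epsilon)})$. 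The only remaining worry is the factor $P(Y|\theta,\zeta)^{-1}$ in~\eqref{eq:blockinf}: I need a lower bound on $P(Y|\theta,\zeta)$ that does not decay faster than the numerator. Here I would argue that $P(Y|\theta,\zeta)=\int P(Y|u,\theta,\zeta)f_u(u)\,du$ is bounded below by integrating over the region $|u|\le\sigma$ (say), on which each conditional Bernoulli factor $h(\eta_j/\sigma+u/\sigma)^{y_j}\{1-h(\cdot)\}^{1-y_j}$ is bounded away from $0$ uniformly once $\eta_j/\sigma$ stays bounded (which it does, by Assumptions~1--2); thus $P(Y|\theta,\zeta)\ge c>0$ for some constant not depending on $\sigma$, or at worst decays only polynomially. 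Then $P(Y|\theta,\zeta)^{-1}(\partial P/\partial\eta)(\partial P/\partial\eta)^\T = O(\sigma^{-2}e^{-2\sigma\min_j B_j/(1+\epsilon)})$, which is subsumed in $O(\sigma^{-1}e^{-\sigma\min_j B_j/(1+\epsilon)})$; pre- and post-multiplying by the fixed matrix $F$ does not change the order, and summing over the finitely many ($\le 2^m$) non-quasi-increasing outcomes preserves it. Since $\epsilon>0$ is arbitrary, this gives the asymptotically negligible contribution claimed.

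For the quasi-increasing outcomes I would split into the two cases of Theorem~\ref{thm:asymp-prob-quasi-incr}. In case~(i) (an increasing outcome with $|\mathcal S_0\cap\mathcal Z(j')|=0$ or $|\mathcal S_1\cap\mathcal Z(j')|=0$), the leading term of $P(Y|\theta,\zeta)$ from~\eqref{increasinglik} is of constant order (a difference of two $\Phi$ values), while Theorem~\ref{thm:asymp-approx-deriv}(i) shows each relevant component of $\partial P/\partial\eta_j$ is $O(\sigma^{-1})$ — the $C^{(1)}$ term carries the $\sigma^{-1}\phi(-\eta_j/\sigma)$ prefactor — so the rank-one matrix $P^{-1}(\partial P/\partial\eta)(\partial P/\partial\eta)^\T$ is $O(\sigma^{-2})$, and in particular $O(\sigma^{-1})$ or $O(\sigma^{-2})$ as asserted. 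I should also note that for components $j$ with $\mathcal N(j)\not\subseteq\mathcal S_0$ or $\mathcal P(j)\not\subseteq\mathcal S_1$, the even faster bound~\eqref{eq:appx-deriv-fast-cvg} applies, so those entries of the derivative vector do not spoil the order. In case~(ii) of Theorem~\ref{thm:asymp-prob-quasi-incr} ($|\mathcal S_0\cap\mathcal Z(j')|\ge1$ and $|\mathcal S_1\cap\mathcal Z(j')|\ge1$), the likelihood is itself $O(\sigma^{-1})$ with leading coefficient $\sigma^{-1}\phi(\eta_{j'}/\sigma)\int\{1-h\}^{|\mathcal S_0\cap\mathcal Z|}h^{|\mathcal S_1\cap\mathcal Z|}\,dt$, and each derivative entry is again $O(\sigma^{-1})$ by~\eqref{qincreasingderiv}; hence $P^{-1}(\partial P/\partial\eta)(\partial P/\partial\eta)^\T = O(\sigma^{-2}/\sigma^{-1}) = O(\sigma^{-1})$. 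So every quasi-increasing outcome contributes a term of order $O(\sigma^{-1})$, and some (those in case~(i)) only $O(\sigma^{-2})$, matching the statement; summing the finitely many such outcomes and conjugating by $F$ preserves these orders.

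The main obstacle I anticipate is \emph{the lower bound on $P(Y|\theta,\zeta)$} used to control the factor $P^{-1}$ in the non-quasi-increasing case — one must be sure the likelihood of a ``bad'' outcome, though small, does not decay as fast as (or faster than) the exponentially small squared derivative, which would be needed for the product to stay negligible; I expect this is handled by the crude $|u|\le\sigma$ integration argument above, but it needs to be stated carefully, perhaps invoking Assumption~3 to quantify how the $\Phi$-type lower bound behaves. A secondary, more bookkeeping-heavy point is verifying in case~(ii) of Theorem~\ref{thm:asymp-prob-quasi-incr} that the ratio of an $O(\sigma^{-1})$ numerator-squared over an $O(\sigma^{-1})$ denominator is genuinely $O(\sigma^{-1})$ and not merely $O(1)$ — this requires that the leading constants in numerator and denominator are strictly positive and cancel cleanly, which follows because both leading coefficients share the factor $\phi(\eta_{j'}/\sigma)$ and the quadrature constants $C^{(k)}$, $\int\{1-h\}^ah^b\,dt$ are strictly positive. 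Beyond that, the argument is routine: finitely many outcomes, fixed model matrix $F$, and the triangle inequality for the $O(\cdot)$ bounds.
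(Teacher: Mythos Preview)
Your argument for the non-quasi-increasing case has a genuine gap, precisely at the point you flagged as the main obstacle. The proposed lower bound on $P(Y|\theta,\zeta)$ fails: the conditional Bernoulli factors are \emph{not} bounded away from zero on any region of $u$-space carrying non-negligible probability. The linear predictor entering $h$ is $\eta_j+\sigma u$ (with $u$ standard normal), not $\eta_j/\sigma+u/\sigma$; since each $\eta_j$ is of order $\sigma$ under Assumptions~1--2, the argument of $h$ is itself of order $\sigma$ and $h$ is exponentially close to $0$ or $1$. For a non-quasi-increasing outcome there is no value of $u$ at which all factors are simultaneously bounded away from $0$, so $P(Y|\theta,\zeta)$ is itself exponentially small in $\sigma$, not bounded below by a constant or even polynomially.

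The paper resolves this with a different, cleaner device that sidesteps any lower bound on $P(Y)$: it proves the \emph{universal ratio bound} $|\partial P(Y)/\partial\eta_j|/P(Y)\le 2$ for every outcome and every $j$. This follows from the elementary logistic inequalities $h(t),\,1-h(t)\ge \tfrac12 e^{-|t|}$ and $h'(t)\le e^{-|t|}$: differentiating with respect to $\eta_j$ replaces one factor $h(\eta_j+\sigma u)$ or $1-h(\eta_j+\sigma u)$ in the integrand by $h'(\eta_j+\sigma u)$, and the inequalities show this at most doubles the integrand pointwise. With this bound in hand, every $(j,l)$ entry of $P^{-1}(\partial P/\partial\eta)(\partial P/\partial\eta)^\T$ is bounded by $2\,|\partial P/\partial\eta_l|$, and the claimed orders then follow directly from Theorem~2: $O(\sigma^{-1})$ for quasi-increasing outcomes via part~(i), and $O(\sigma^{-1}e^{-\sigma\min_j B_j/(1+\epsilon)})$ for the rest via part~(ii). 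Your case-splitting for quasi-increasing outcomes is along the right lines, but the ratio bound also removes your secondary worry about matching leading constants in case~(ii) of Theorem~1.
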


Proofs for Theorems 1--3 are in Appendix~2. From Theorem~\ref{thm:quasi-only}, an asymptotic approximation to the information matrix in \eqref{eq:blockinf} need only include contributions from increasing and quasi-increasing outcomes. The importance of quasi-increasing outcomes seems difficult to capture using approximations, such as those in Sections~\ref{sec:mql}--\ref{sec:amql}, that only incorporate the first and second order moments of the joint distribution of the responses.
 
We combine the asymptotic approximations to $P(Y|\theta,\zeta)$ and $\partial P(Y|\theta,\zeta)/\partial \eta_j$, from Theorems 1 and 2 respectively, with~\eqref{eq:blockinf} to provide an \textit{asymptotic outcome-enumeration} approximation to the information matrix. The only additional requirement is that for the $j$th treatment in the $i$th support block $(i=1,\ldots,b;\,j=1,\ldots,m)$, we must define a suitable partition of the indices $\{1,\ldots,m\}$ into sets $\mathcal{N}_i(j)$, $\mathcal{Z}_i(j)$, and $\mathcal{P}_i(j)$. 
This partition should be such that linear predictor $\eta_{ij'}$, relative to $\sigma$, is close to, less than, or greater than $\eta_{ij}$ for $j' \in \mathcal{Z}_{i}(j)$, $j' \in \mathcal{N}_i(j)$ or $j' \in \mathcal{P}_i(j)$ respectively. We propose to form these partitions automatically using the heuristic algorithm given in Appendix~1. 
As presently implemented, the objective function corresponding to the asymptotic approximation is discontinuous; nonetheless, the resulting designs typically have high efficiencies relative to designs from the na\"ive outcome-enumeration approximation, competitive with those from the other methods. In certain circumstances, discussed in Section~\ref{sec:ex-local}, this asymptotic approximation outperforms other methods.

When $\sigma^2$ is large, the recovery of inter-block information that occurs when using a mixed model for analysis is important for parameter estimation. For large $\sigma^2$, separation of outcomes \citep{aa1984} occurs within all blocks with high probability, in which case the parameters of the corresponding fixed block effects model are not estimable (see Propositions 1 and 2 in Appendix~2). Despite this, efficient parameter estimation is still possible under the mixed model (see Appendix~3).

\subsection{Interpolated outcome-enumeration}
\label{sec:mlni}

In this section, we discuss a more direct numerical approximation for $M_\beta(\xi,\theta)$ under the logistic random intercept model. Note that for this model, the likelihood depends on the regression parameters $\beta$ only through the vector $\eta$. Let $P_Y(\eta,\sigma^2)= P(Y|\zeta, \theta)$ and define 
\begin{equation}
\mathcal{Q}(\eta, \sigma^2)= \sum_{Y \in \{0,1\}^m} 
					\frac{1}{P_Y} 
						\left( 
							\frac{\partial P_Y}
								{\partial \eta}
						\right) 
						\left( 
							\frac{\partial P_Y}
								{\partial \eta}
						\right)^\T \,,
						\label{eq:weight}
\end{equation}
so that, by \eqref{eq:blockinf}, $M_\beta(\zeta,\theta)= F^\T \mathcal{Q} F$.

An \textit{interpolated outcome-enumeration} approximation to $M_\beta(\zeta,\theta)$ can be developed by surrogate modelling of the matrix-valued function $\mathcal{Q}$. The idea is to compute the values of the function $\mathcal{Q}$ at a collection of training points, and interpolate these data to predict the value of $\mathcal{Q}$ at new sites $(\eta,\sigma^2)$. Interpolating $\mathcal{Q}$ as a function of $\eta$ is particularly computationally efficient for finding Bayesian designs, as the same interpolator can be used for any value of $\beta$.

Surrogate modelling is widely applied in `computer experiments' on expensive-to-evaluate computational models for complex phenomena (see \citealp{santner}). We believe that its use for accelerating the computation of approximations necessary for the optimal design of physical experiments is new.  For computer experiments, Gaussian process modelling (Kriging) is well-established as a surrogate; it can be used with training sets not arranged in a regular grid and can straightforwardly be applied to multidimensional problems. For block size $m=2$, it is faster to use bilinear or bicubic interpolation and a regular grid.

\section{Examples for binary response}
\label{sec:bin}
\subsection{Preliminaries for the examples}
In Sections \ref{sec:ex-local} and \ref{sec:EX1}, $D$-optimal designs are found, compared and assessed for blocks of size $m=4$ and a binary response logistic random intercept model with two variables and the following linear predictor
\begin{equation}
\label{eq:2factor-model}
\nu(x;u,\beta) = \beta_0 + \beta_1 x^{(1)} + \beta_2 x^{(2)} + u \,, \quad u \sim N(0,\sigma^2)\,,
\end{equation}
where $x=(x^{(1)}, x^{(2)})^\T\in [-1,1]^2$. In Section \ref{sec:4factor}, $D$-optimal designs are found for a logistic random intercept model with four factors and eight fixed parameters.

In Sections~\ref{sec:ex-local} and~\ref{sec:4factor}, we find locally $D$-optimal designs for various parameter scenarios by approximating the information matrix using adjusted generalized estimating equations and adjusted marginal quasi-likelihood. In Section~\ref{sec:ex-local} we also find locally optimal designs using unadjusted generalized estimating equations (assuming $\beta^\ast=\beta$) and, for large $\sigma^2$, asymptotic outcome-enumeration. In these sections we find it advantageous to specify parameter scenarios on the scale of the marginal effects, $\beta_\text{att}$, to facilitate performance comparisons across different values of $\sigma^2$. Intuitively, this setup mimics strong information being available for the marginal effects, and uncertainty in the strength of dependence. In Section~\ref{sec:EX1}, we find Bayesian $D$-optimal designs, with the prior information specified on the conditional parameters, as no comparisons are made across different values of $\sigma^2$; we set $\beta_0 \sim U[-0.5,0.5]$, $\beta_1 \sim U[3,5]$, $\beta_2 \sim U[0,10]$, and $\sigma^2=5$. Thus, there is substantial uncertainty in the value of $\beta_2$, and moderate block-to-block variability. Here, we approximate the information matrix using the adjusted marginal quasi-likelihood, adjusted generalized estimating equations, and interpolated outcome-enumeration methods. For all of our examples, efficiencies of optimal designs found using the different approximations are calculated relative to $D$-optimal designs found using the na\"ive outcome-enumeration approximation.

For all approximations, we use a quasi-Newton method \citep[the Broyden--Fletcher--Goldfarb--Shanno algorithm;][pp.~136--143]{nocedal-wright} to obtain optimal, or near-optimal, designs numerically; that is optimal or highly efficient combinations of $\zeta_k$, $w_k$ and $b$. Multiple random starts of the algorithm are used to attempt to identify a global optimum of the objective function. Convergence is assessed via comparison of the optima obtained from the different starts, and was considered satisfactory for the examples presented here. We assess performance of the obtained designs using local efficiency, $
\mbox{eff}(\xi | \theta) = \{
 	{|M_\beta(\xi,\theta)|} /
 		{\sup_{\xi'} |M_\beta(\xi' ,\theta)|}
\}^{1/p} $.


\subsection{Example 1: Locally optimal designs}
\label{sec:ex-local}


\begin{table}[p]
{\footnotesize
\begin{tabular}{ccccccccc}
 && \multicolumn{6}{c}{$\sigma^2$} \\
$\beta_\text{att}^{\T}$ & Design & 1& 2& 5& 10& 20 & 50 \\[0.7ex]
(0,1,1) &  
Unadj.~gen. & 96$\cdot$3--100$\cdot$0  & 94$\cdot$5--100$\cdot$1 & 82$\cdot$9--99$\cdot$6 &  78$\cdot$8--94$\cdot$3 & 71$\cdot$8--87$\cdot$1 &  59$\cdot$9--76$\cdot$6	\\[0.5ex]
& Adj.~marg. & 100 & 100 & 100 & 100 & 100 & 100\\
 &Adj.~gen. &  99$\cdot$7--100$\cdot$0 &  99$\cdot$7--100$\cdot$1 & 99$\cdot$2--100$\cdot$0 & 98$\cdot$8--100$\cdot$0 & 98$\cdot$5--100$\cdot$0 &98$\cdot$1--100$\cdot$0 \\
  &Asymp.~enum.& &      &    && 100$\cdot$0&  94$\cdot$8\\[0.5ex]
(0,3,2)&   
Unadj.~gen. &  86$\cdot$2--97$\cdot$3 & 84$\cdot$5--93$\cdot$2  & 79$\cdot$1--85$\cdot$3 &  74$\cdot$7--79$\cdot$0 &70$\cdot$9--73$\cdot$4 &  63$\cdot$3--67$\cdot$7  \\[0.5ex]
&Adj.~marg.&  99$\cdot$9 &99$\cdot$9 &100$\cdot$0 & 99$\cdot$9 & 99$\cdot$4  &95$\cdot$2 \\
& Adj.~gen.& 85$\cdot$3--99$\cdot$8 & 85$\cdot$6--99$\cdot$6 & 86$\cdot$3--99$\cdot$5 & 87$\cdot$2--99$\cdot$7 & 87$\cdot$7--100$\cdot$0 &  83$\cdot$9--98$\cdot$5\\
  & Asymp.~enum.  &  &  &  &   &96$\cdot$4& 97$\cdot$4\\[0.5ex]
 (0,5,10)& 
 Unadj.~gen. & 82$\cdot$3--96$\cdot$1 & 79$\cdot$8--91$\cdot$9  & 70$\cdot$1--84$\cdot$4 & 65$\cdot$2--78$\cdot$4 & 
64$\cdot$7--72$\cdot$8 & 52$\cdot$2--67$\cdot$0\\[0.5ex]
  \quad $\times (1+5c^2 )^{-\frac{1}{2}}$ & 
  Adj.~marg. & 99$\cdot$9  & 99$\cdot$9 & 100$\cdot$0 & 99$\cdot$8  & 99$\cdot$7&   99$\cdot$8\\ 
 &Adj.~gen.& 83$\cdot$9--99$\cdot$1 & 84$\cdot$1--98$\cdot$7 & 84$\cdot$8--98$\cdot$9 & 85$\cdot$6--99$\cdot$4 & 86$\cdot$0--99$\cdot$5 & 83$\cdot$9--99$\cdot$1\\
& Asymp.~enum. &    &  &    &  & 94$\cdot$8&  96$\cdot$1\\[0.5ex]
(1,2,3)$\dagger$     &
Unadj.~gen. & 84$\cdot$3--96$\cdot$8 & 83$\cdot$2--94$\cdot$5 & 75$\cdot$2--88$\cdot$5 & 70$\cdot$6--78$\cdot$7 & 65$\cdot$7--78$\cdot$9 & 58$\cdot$9--73$\cdot$9 \\[0.5ex]
& Adj.~marg. 			&100$\cdot$4 	& 99$\cdot$1  	& 96$\cdot$6 	& 92$\cdot$1 	& 84$\cdot$8 	& 73$\cdot$5(*)\\
	 & Adj.~gen. 			& 84$\cdot$1--99$\cdot$5 & 85$\cdot$0--99$\cdot$0 & 86$\cdot$6--99$\cdot$2 & 87$\cdot$6--98$\cdot$9 & 86$\cdot$7--97$\cdot$2 & 77$\cdot$6--91$\cdot$7(*) \\
 &Asymp.~enum. &    & &    &&  93$\cdot$5&  98$\cdot$3 \\[0.5ex]
(1,4,4) &
Unadj.~gen. & 81$\cdot$7--96$\cdot$9 &  80$\cdot$7--94$\cdot$1& 76$\cdot$0--86$\cdot$3 &70$\cdot$8--79$\cdot$8& 65$\cdot$0--73$\cdot$1& 58$\cdot$3--64$\cdot$6\\[0.5ex]
& Adj.~marg. & 100$\cdot$0 & 100$\cdot$0 & 99$\cdot$9  & 99$\cdot$4  & 98$\cdot$2  & 97$\cdot$2\\
& Adj.~gen. &  80$\cdot$3--99$\cdot$4 & 81$\cdot$0--99$\cdot$1 & 81$\cdot$9--99$\cdot$3 & 82$\cdot$5--99$\cdot$7 & 82$\cdot$5--98$\cdot$9 & 82$\cdot$5--97$\cdot$4 \\
& Asymp.~enum. &    &  & &   & 97$\cdot$1&  98$\cdot$2\\[0.5ex]
(1,3,3) &
Unadj.~gen.  &82$\cdot$2--97$\cdot$1& 80$\cdot$6--93$\cdot$5  &76$\cdot$4--86$\cdot$4 &71$\cdot$0--79$\cdot$7& 65$\cdot$0--72$\cdot$7& 57$\cdot$1--63$\cdot$1\\[0.5ex]
&Adj.~marg. & 99$\cdot$7 & 99$\cdot$5 & 100$\cdot$0  & 99$\cdot$3 & 97$\cdot$9  & 95$\cdot$1\\
&Adj.~gen.&  79$\cdot$6--99$\cdot$3 & 80$\cdot$5--98$\cdot$6 & 83$\cdot$1--99$\cdot$4 & 84$\cdot$9--99$\cdot$7 & 85$\cdot$9--98$\cdot$8 & 84$\cdot$6--95$\cdot$6 \\
&Asymp.~enum. &  & &    &   &97$\cdot$1&  98$\cdot$2 \\[0.5ex]
(1,2,2) &    
 Unadj.~gen.  & 82$\cdot$1--97$\cdot$9 & 81$\cdot$9--93$\cdot$4 & 78$\cdot$6--88$\cdot$2 & 73$\cdot$4--78$\cdot$6&  67$\cdot$3--69$\cdot$5 & 58$\cdot$4--64$\cdot$3 \\[0.5ex]
&Adj.~marg. & 100$\cdot$6 & 100$\cdot$3 & 100$\cdot$2 & 98$\cdot$5  & 96$\cdot$1  & 92$\cdot$6 \\
& Adj.~gen. & 83$\cdot$3--100$\cdot$1 & 84$\cdot$6-- 99$\cdot$3 &  87$\cdot$4-- 99$\cdot$8 & 89$\cdot$3-- 99$\cdot$6 & 90$\cdot$1-- 98$\cdot$0 &  88$\cdot$5-- 95$\cdot$4 \\
& Asymp.~enum. &     & &    &   & 95$\cdot$1& 97$\cdot$9 \\[0.5ex]
(2,1,3)$\dagger$ &   
Unadj.~gen. & 84$\cdot$3--96$\cdot$8 & 83$\cdot$2--94$\cdot$5&78$\cdot$5--89$\cdot$2 & 73$\cdot$4--85$\cdot$9 & 62$\cdot$8--76$\cdot$7 & 56$\cdot$4--64$\cdot$1	\\[0.5ex]
 &Adj.~marg. & 99$\cdot$9 & 99$\cdot$1 & 96$\cdot$6& 92$\cdot$1& 84$\cdot$8& 78$\cdot$0(*)\\
& Adj.~gen.&  83$\cdot$6-99$\cdot$5 & 84$\cdot$5-99$\cdot$0 & 86$\cdot$1-99$\cdot$1 & 86$\cdot$8-98$\cdot$9 & 85$\cdot$8-97$\cdot$0 & 77$\cdot$4-90$\cdot$9(*)\\
& Asymp.~enum.& &  & &  & 95$\cdot$0&  98$\cdot$0
\end{tabular}}
\caption{Example 1: computed efficiencies of locally $D$-optimal designs from different methods\label{tab:ex-local}. Unadj.~gen. -- Unadjusted generalized estimating equations; Adj.~marg. -- adjusted marginal quasi-likelihood; Adj.~gen. -- adjusted generalized estimating equations;  Asymp.~enum. -- asymptotic outcome-enumeration. Reported efficiencies for generalized estimating equation methods are for $\rho$ = 0$\cdot$1, 0$\cdot$15, 0$\cdot$2, $\ldots$, 0$\cdot$7. Symbols $\dagger$ and (*) indicate parameter values for which the adjusted marginal modelling approximations give particularly inefficient designs.}
\end{table}

\begin{table}[p]
\begin{tabular}{cc}
Method &  Time per parameter vector (processor-seconds)\\
Na\"ive outcome-enumeration ($\sigma^2=50$)  & $3\times 10^5$  \\
Na\"ive outcome-enumeration ($\sigma^2=1$)& $5\times 10^4$ \\ 
Asymptotic outcome-enumeration &   $7\times 10^3$  \\
Adjusted generalized estimating equations*& $1 \times 10^3$ ($8\times 10^3$)  \\
 Adjusted marginal quasi-likelihood & $4\times 10^2$\\
\end{tabular}
\caption{Example 1: computational expense for locally $D$-optimal designs\label{tab:local-comptime}. *Figure in brackets is indicative of time when design re-use is impossible.}
\end{table}

The purpose of this example is twofold. Firstly, we wish to illustrate the performance of the methods for different $\sigma^2$. Secondly, we demonstrate circumstances under which the resulting designs are robust to a reasonable range of values assumed for $\sigma^2$.

Table \ref{tab:ex-local} gives the efficiencies under this regime of optimal designs from the different approximations relative to an optimal design found using the na\"ive outcome-enumeration approximation.  It is clear that the unadjusted generalized estimating equation approach is by far the worst method, with efficiencies frequently less than 90\%. In most cases, the remaining closed-form approximations are competitive with na\"ive outcome-enumeration. The performance of the adjusted generalized estimating equation approach depends critically on the choice of $\rho$ which is treated here as a tuning parameter.

We observed two cases for which the adjusted marginal and adjusted generalized estimating equation methods performed poorly. For $\beta_\text{att} = (1,2,3)^\T$ and $(2,1,3)^\T$, with $\sigma^2=50$, the design efficiencies from the former two methods were below 92\%. These cases are unusual in that, for all $\sigma^2>1$, the two marginal approximations selected designs that replicate treatments within at least one of their blocks. This appears inefficient: the designs from both the na\"ive and asymptotic outcome-enumeration approximations do not feature within-block replication, and the latter design is at least 98\% efficient. The only other case where this replication occurred in the marginal approximation designs for large $\sigma^2$ was $\beta_\text{att}=(1,2,2)^\T$, where the efficiency was again relatively low. Our theoretical results (Section~\ref{sec:strongdep}) suggest that marginal methods may poorly approximate the information matrix for designs featuring within-block replication when $\sigma^2$ is large. Thus we would recommend some caution when $\sigma^2$ is large and use of the marginal approximations yields designs featuring within-block replication of treatments. For such designs, the error from these approximations may be large. Additionally, the small $u$ Taylor approximations underlying the covariance approximation in the adjusted marginal quasi-likelihood method cannot be expected to be accurate when $\sigma^2$ is large and large random effects are anticipated.

Table \ref{tab:local-comptime} gives the average total processor time for each method, as recorded in a high performance parallel computing environment with twelve 2$\cdot$4GHz cores per node. The times given are per parameter vector for 100 random starts of the optimization algorithm. Na\"ive outcome-enumeration is the most expensive method followed by asymptotic outcome-enumeration, adjusted generalized estimating equations and adjusted marginal quasi-likelihood. The computational expense of the adjusted generalized estimating equations method depends on the structure of the problem. Here, there are many parameter scenarios with the same values of $\beta_\text{att}$ which allows re-use  of adjusted generalized estimating equation designs for a given $\beta_{\text{att}}$ for various $\sigma^2$. If re-use were not possible, then the time per design would be higher: an indicative figure is given in parentheses. The time to obtain a design for given $\rho$ is comparable with that from adjusted marginal quasi-likelihood.

For moderate dependence ($\sigma^2 \leq 10$) choosing a single value of $\sigma^2=5$ appears to be very robust; for all $\beta_\text{att}$ considered, the na\"ive outcome-enumeration design with $\sigma^2 =5$ has a calculated efficiency of at least 99$\cdot$1\% for $\sigma^2=1,2,10$. Assuming a single value of $\sigma^2=1$ is less robust, though still reasonable, the worst case is when $\beta_\text{att}=(1,2,2)^\T$ and the true $\sigma^2=10$; the efficiency of the resulting design is 97$\cdot$1\%. 
However, if the dependence is actually strong then the above designs may perform comparatively poorly; when $\beta_\text{att}=(0,3,2)^\T$, $\sigma^2=50$, the design obtained assuming $\sigma^2=5$ has a calculated $D$-efficiency of 93$\cdot$6\%. This robustness of an optimal design to a wide range of assumed values of $\sigma^2$ is a consequence of specifying the parameters on the marginal scale.

\subsection{Example 2: Bayesian optimal designs}
\label{sec:EX1}

\begin{figure}[t]
\begin{center}
\includegraphics[width=0.8\linewidth]{./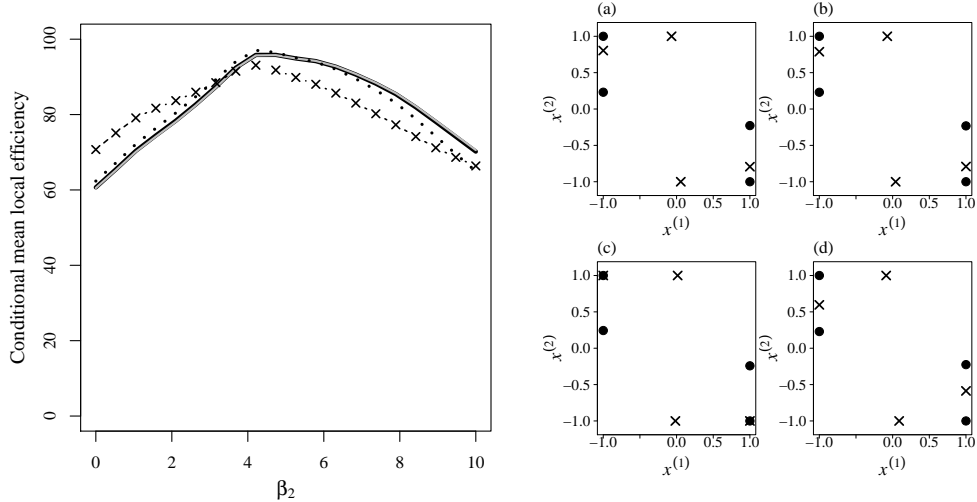}
\caption{Example 2: conditional mean efficiencies $\E[\eff(\xi;\theta)|\beta_2]$ (left panel) and support blocks (right panel) of Bayesian $D$-optimal designs from maximum likelihood via na\"ive outcome-enumeration  [black line, (a)],  maximum likelihood via interpolated outcome-enumeration [solid grey line, (b)], adjusted marginal quasi-likelihood [dotted black line, (c)], and adjusted generalized estimating equations [dashed and crossed black line, (d)]. Treatments with the same plotting character are in the same block.}
\end{center}
\label{fig:lookup-bayes-effs}
\end{figure}



We choose the design $\xi$ to maximize $\psi(\xi)$ from Section~\ref{sec:opt-crit} and approximate the integral in the objective function by averaging over a Latin hypercube sample of 50 values of $\beta$ from $[-0.5,0.5]\times[3,5]\times[0,10]$. As the value of $\sigma^2$ is assumed known, for the interpolated outcome-enumeration approximation we only need build a surrogate model of $\mathcal{Q}$ as a function of $\eta$.

Bayesian $D$-optimal designs were computed for each of the different approximations using 1000 random starts; the support blocks of the designs are shown in Fig.~\ref{fig:lookup-bayes-effs} with corresponding weights given in Table~\ref{tab:lookup-example-designs}. A single value $\rho=0$$\cdot$$6$, corresponding to fairly strong correlation, was used in the adjusted generalized estimating equations approximation. For each method, from the 1000 designs generated the best was selected with respect to na\"ive outcome-enumeration. 

All of the designs contain multiple support blocks due to the degree of uncertainty in the parameters and the small block size. Locally $D$-optimal designs were also found for each of the 50 sampled parameter vectors under the na\"ive outcome-enumeration approximation, and the local efficiency of each Bayesian design was calculated relative to each of these 50 designs. Then, Gaussian process emulators were constructed for the efficiency profile of each Bayesian design. Figure \ref{fig:lookup-bayes-effs} shows the dependence of the efficiency on $\beta_2$, via approximations of $\E[ \eff(\xi;\theta) | \beta_2]$ obtained from the efficiency profile emulators. The performance of all of the Bayesian designs varied little according to the value of $\beta_0$ or $\beta_1$, with $\E[\eff(\xi;\theta)|\beta_0]$ and $\E[\eff(\xi;\theta)|\beta_1]$ changing by fewer than 4 percentage points over the ranges of $\beta_0$ and $\beta_1$ respectively. The conditional mean efficiency of the design from the adjusted generalized estimated equations approach is clearly quite different, as a function of $\beta_2$, from the local efficiencies from the other methods. The designs from all of the approximations appear similar to the na\"ive outcome-enumeration design (compare Figures~\ref{fig:lookup-bayes-effs}(a)--(d)).

To train the interpolated outcome-enumeration approximation of $\mathcal{Q}$, a random Latin hypercube sample of 10,000 $\eta$ vectors was drawn from $[-20,20]^4$, and the matrix $\mathcal{Q}$ evaluated for each vector. The second-order, compactly-supported Wendland covariance function was used, with range parameter chosen manually as 15 to make the predictions appear reasonably smooth and accurate. Independent Gaussian process models were fitted to the $m^2$ entries of $\mathcal{Q}$. The use of a compactly-supported covariance function is advantageous here due to the large number of training points; it enables inversion of the covariance matrix in a reasonable time, and permits relatively fast predictions from the fitted model. For finding Bayesian designs, the interpolation method required around 3$\cdot$2 times less computational effort than na\"ive outcome-enumeration for this example (Table \ref{tab:lookup-example-designs}). If more quadrature points were used to approximate the prior distribution, or if an adequate emulator could be found using fewer training points, then the advantage of using interpolation to approximate the objective function would be greater (for 200 quadrature points, with the same training set, objective function evaluation using interpolation is approximately 6 times faster than na\"ive outcome-enumeration). The advantage will also be more pronounced for larger $\sigma^2$. The closed-form approximations (using a single $\rho$) are approximately two orders of magnitude faster than na\"ive outcome-enumeration.

\begin{table}[thbp]
\def~{\hphantom{0}}
\begin{tabular}{ccccc}
 			& \multicolumn{2}{c}{Block weights}  \\
Design method 							& $\bullet$  & $\times$ & Bayes efficiency & Time (processor-seconds)\\
Likelihood, na\"ive outcome-enumeration 			& 0$\cdot$744 & 0$\cdot$256 & 100$\cdot$00 & 1$\cdot$65{}$\times$$10^7$ \\ 
Likelihood, interpolated outcome-enumeration			& 0$\cdot$749 & 0$\cdot$251 & 99$\cdot$96 &  5$\cdot$19{}$\times$$10^6$\\
Adjusted marginal quasi-likelihood 			& 0$\cdot$748 & 0$\cdot$252 & 99$\cdot$79 & 1$\cdot$80{}$\times$$10^5$\\
Adjusted estimating equations			 	& 0$\cdot$466 & 0$\cdot$534 &97$\cdot$94 & 2$\cdot$20{}$\times$$10^5$
\end{tabular}
\caption{Example 2: details of Bayesian designs \label{tab:lookup-example-designs}.
Above, $\bullet$ and $\times$ correspond to symbols in \mbox{Fig. \ref{fig:lookup-bayes-effs}}(a)--(d).
The Bayes efficiency of  $\xi$ is $\exp[\{\psi(\xi) - \sup_{\xi'}\psi(\xi')\}/p]$.}
\end{table}

%

\subsection{Example 3: Locally optimal designs, four factors}
\label{sec:4factor}
We investigated locally optimal designs with $x=(x^{(1)},x^{(2)},x^{(3)},x^{(4)})^\T\in [-1,1]^4$, and
\begin{align*}
\nu(x;u,\beta) = &\beta_0 + \beta_1 x^{(1)} + \beta_2 x^{(2)}  + \beta_3 x^{(3)} + \beta_4 x^{(4)} \\
	&\quad + \beta_{12} x^{(1)}x^{(2)} + \beta_{13}x^{(1)}x^{(3)} + \beta_{14} x^{(1)}x^{(4)} + u \,, \quad u \sim N(0,\sigma^2)\,,
	\label{eq:4factor-model}
\end{align*}
with  $\beta_{\text{att}} = (2,3,0,3,0,0,-2,0)^\T$, $(1,2,1,-3,-1,\frac{1}{4},-\frac{1}{2},3)^\T$, $(0,1,1,1,1,\frac{1}{2},\frac{1}{2},\frac{1}{2})^\T$, and $\sigma^2=1,2 ,5$. Designs were found using the na\"ive outcome-enumeration, adjusted marginal quasi-likelihood and adjusted generalized estimating equation ($\rho$ = 0$\cdot$3, 0$\cdot$5, 0$\cdot$6) approximations with 100, 1000 and 1000 random starts respectively. In all cases, the marginal approximations required less computational effort despite the more thorough search, yielding designs with at least 99.5\% efficiency relative to the design from the na\"ive outcome-enumeration approximation.




\section{Poisson response}

\subsection{Approach}
\label{sec:poisapp}

In this section we demonstrate the use of the marginal quasi-likelihood approximation to find $D$-optimal designs for a Poisson model with random intercept. We compare the designs to those of \citet{niaparast}, who investigated design for this model using a direct quasi-likelihood approximation to the information matrix, and also to the designs from the analytical results of \citet{russell-poiss} for the Poisson model with no random effects. The conditional distribution of the response is assumed to be Poisson, with link function $g(\mu)=\log(\mu)$. In the random intercept model, $u \sim N(0,\sigma^2)$ is a scalar, and $\nu(x;u,\beta) = f^\T(x)\beta + u$.

Quasi-likelihood estimation requires a parametric specification of only the marginal mean and variance of the response, and not a full probability model. \citet{niaparast} obtained a covariance matrix for the resulting parameter estimators using the actual marginal mean and variance for the Poisson random intercept model which are analytically tractable. We shall refer to this as the `direct' approach. In general, there are issues with the use of quasi-likelihood for dependent data \citep[][Ch.9]{mccullagh-nelder}; however the above approach could be viewed as an application of generalized estimating equations \citep[][]{liang-zeger} with a working correlation structure calculated from the full probability model.

\subsection{Comparison of designs, $m=3$}
\label{sec:poisex}

Locally $D$-optimal designs for the Poisson random intercept model were computed by numerically optimizing the determinant of the information matrix under the marginal quasi-likelihood and direct, quasi-likelihood, approximations. The linear predictor structure~\eqref{eq:2factor-model} was assumed, with conditional parameter values $(\beta_0,\beta_1,\beta_2)=(3,1,2)$, together with several values for $\sigma^2$.


For $\sigma^2=0$, the designs found numerically coincided with those anticipated by the theoretical results of \citet{russell-poiss} for models with no random effects. For $\sigma^2$ = 0$\cdot$01, 0$\cdot$025, 0$\cdot$05, 0$\cdot$1, each of the designs contains a single support block ($b=1$, $w_1=1$) of the form $\zeta(t)=( (1,1)^\T, (-1,1)^\T, (1,t)^\T )$, with $t$ = -0$\cdot$083, -0$\cdot$091, -0$\cdot$095, \mbox{-0$\cdot$096} respectively. The designs from the two methods agree to three decimal places.

We assess the efficiency, for maximum likelihood estimation, of designs resulting from the choice of $t \in [-1,1]$ by using Monte Carlo integration to approximate $M_\beta(\zeta(t); \theta)$ (Section~\ref{sec:complenum}) and nonparametric smoothing to obtain a surrogate, $\tilde{\Psi}(t)$, for $\Psi(t)= |M_\beta(\zeta(t); \theta)|$ \citep[see also][]{muller1995optimal}. Let $t^\star=\max_{t'\in [-1,1]} \tilde{\Psi}(t')$. Figure \ref{fig:poiss-objfn} shows the approximate efficiency in the neighbourhood of the optimal $t$, obtained from  \mbox{$\tilde{\operatorname{eff}}(t) = \{ \tilde{\Psi}(t)/ \tilde{\Psi}(t^\star) \}^{1/p}$}, together with estimates of $\{\Psi(t)/ \tilde{\Psi}(t^\star) \}^{1/p}$ each using $10^5$ Monte Carlo samples. The total processor time for the Monte Carlo computations was approximately 1$\cdot$5$\times 10^5$s, using a sixteen-core 2$\cdot$6 GHz node. The results indicate that, for all values of $\sigma^2$ considered here, both the marginal and direct quasi-likelihood designs have an efficiency around 100\%, and also any choice of $t$ in \mbox{[-0$\cdot$15, 0]} will be very highly efficient.

\begin{figure}[t]
\begin{centering}
\includegraphics[width=\linewidth]{./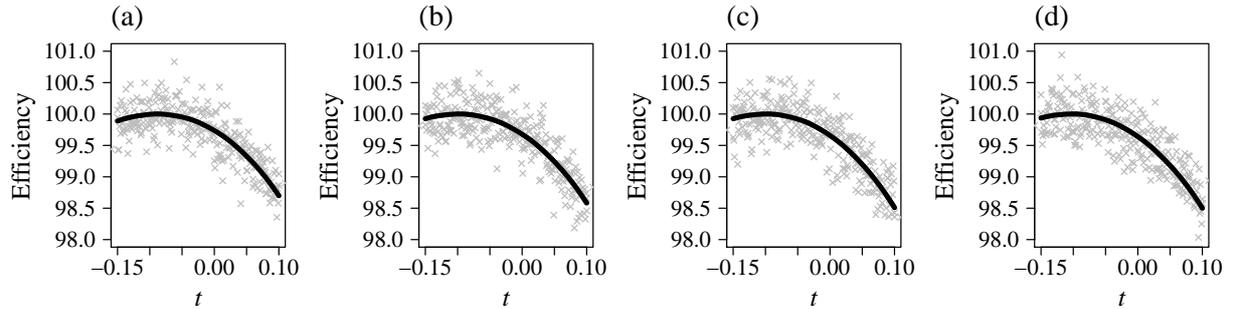}
\end{centering}
\caption{Approximate efficiency of the design $\zeta(t)$ obtained using Monte Carlo approximation and nonparametric smoothing, for the following values of $\sigma^2$: (a) 0$\cdot$01, (b) 0$\cdot$025, (c) 0$\cdot$05, (d) 0$\cdot$1. }
\label{fig:poiss-objfn}
\end{figure}

The direct, quasi-likelihood, approach for the Poisson response is similar to the adjusted marginal or adjusted generalized estimating equations methods for a binary response, in the sense that it accounts for the form of the marginal mean. Theoretically, it has the advantage of not relying on Taylor series approximations. Unlike for binary data, where an unadjusted marginal method is poor, for a Poisson response the unadjusted method has virtually identical performance to the direct method. This is perhaps to be expected if we consider that the normal approximations to the response distribution used in the marginal approximation are much more accurate for Poisson than binary responses. There is essentially no computational advantage to the Taylor-series based approximation and so we would recommend the direct approach as a default first choice.

Note the values of $\sigma^2$ used here are much smaller than those used for binary response models in Section~\ref{sec:bin}; for a Poisson response, $\sigma^2$ is chosen to give a plausible range of marginal overdispersion over $\mathcal{X}$ (for example, approximately 1$\cdot$10--45$\cdot$6 when $\sigma^2=0$$\cdot$$1$), and plausible correlation between responses from units in the same block receiving the same treatment.


\section{Discussion}

For the logistic random intercept model, use of a correction for the marginal attenuation of the parameters yields much improved designs; in our examples, designs using this idea often performed on a par with those from na\"ive outcome-enumeration. Further investigations, including simulations to assess small sample properties, are available in the first author's Ph.D. thesis. 

\citet{tekle} employed an information matrix approximation derived from penalized quasi-likelihood \citep{bres-clay}. Their approach requires predictions of the random effects, which they approximated at the design stage using Monte Carlo simulation. The resulting approximation is computationally intensive and is not suitable for routine use on more complex problems. Hence, we chose not to pursue this methodology here.

Avenues for future research include developing the necessary methodology to extend the adjusted closed-form approximations to find designs for models with more complex random effects, and extension of the asymptotic results in Section \ref{sec:strongdep} to other link functions for binary response.

\section*{Acknowledgements}
The authors thank T.~H.~Waterhouse (Eli Lilly) for helpful discussions. This work was supported by the UK Engineering and Physical Sciences Research Council through a platform grant, a PhD studentship and Doctoral Prize for the first author, and a Fellowship for the second author. It was partly undertaken while the authors were visiting the Isaac Newton Institute for Mathematical Sciences, Cambridge, UK. The authors acknowledge the use of the Iridis computational cluster and associated services at the University of Southampton.

\appendix

\section*{Appendix 1: An algorithm for determining membership of sets $\mathcal{N}(j)$, $\mathcal{Z}(j)$, $\mathcal{P}(j)$}
To obtain an asymptotic approximation that performs reasonably for a broad choice of designs, a decision is required on which $\eta_j$ should be considered `close'; that is, for given $\zeta$ and $j$, which indices should we treat as belonging to $\mathcal{Z}(j)$ in order to apply Theorems 1--3? Below we give the algorithm used in Example 1. The algorithm approximates exponentially decaying error terms as zero.

For calculation of the derivatives, the algorithm iteratively augments $\mathcal{Z}(j)$ with the index, $l$, of the next closest predictor to $\eta_j$ if two conditions are satisfied. Condition (A) concerns the coefficient of $\phi(-\eta_j/\sigma)/\sigma$ in the expression of Theorem~\ref{thm:asymp-approx-deriv}, which is an approximation to an integral of the form $\int_{-\infty}^{\infty} h'(t)f_2(t,\sigma^2) dt$, see equation~(A3) in Appendix 2. The value of this integral decreases as the set $\mathcal{Z}(j)$ is augmented. Condition (B), concerning the same coefficient, is a heuristic that prevents the application of a Taylor approximation when $\Delta_{lj}$ is too large, see (A4) in Appendix 2.

For the probability calculation, we use the expression from part (i) of Theorem \ref{thm:asymp-prob-quasi-incr} unless $\max_{j\in \mathcal{S}_0}\{ \eta_j \}$  and $\min_{j \in \mathcal{S}_1}\{ \eta_j \}$ are close (less than 1 apart), in which case we take $j'=\operatorname{arg\,max}_{\{j\in \mathcal{S}_0 \}}\{\eta_{j}\}$, $l'=\operatorname{arg\,min}_{\{j\in \mathcal{S}_1 \}}\{\eta_{j}\}$, $\mathcal{Z}(j')=\{ j',l' \}$ and use the expression in part (ii) of Theorem \ref{thm:asymp-prob-quasi-incr}. The cutoff distance of $\gamma=1$ is chosen because at this point the probabilities in parts (i) and (ii) should be similar, since $\Phi(-\eta_j/\sigma)-\Phi(-\eta_{l}/\sigma) \approx \frac{\eta_{l}-\eta_{j}}{\sigma}\phi(-\eta_{j}/\sigma)= \frac{1}{\sigma}\phi(-\eta_{j}/\sigma)$.

{\small
\begin{algo}
For each possible outcome $Y$, approximate its contribution, $\frac{1}{P(Y)}\left\{ \frac{\partial P(Y)}{\partial \eta}\right\} \left\{\frac{\partial P(Y)}{\partial \eta}\right\}^\T$,  to the information matrix in \eqref{eq:blockinf} using Theorems 1--3 to approximate $P(Y)$ and $\partial P(Y)/\partial \eta$, and add it to the total.\\[1ex]
To compute $P(Y):$ 
\begin{tabbing}
	\enspace Compute $\lambda_0 = \max_{j\in \mathcal{S}_0} \{ \eta_j \}$ and $\lambda_1 = \min_{j \in \mathcal{S}_1}\{ \eta_j \}$\\
	\enspace If $\lambda_1 \geq \lambda_0 + \gamma$:\\
	\qquad Set $P(Y) \leftarrow \Phi(\lambda_1/\sigma) - \Phi(\lambda_0/\sigma)$ [using Theorem 1(i)]\\
	\enspace If $|\lambda_1- \lambda_0| \leq \gamma$:\\
	\qquad Set $P(Y) \leftarrow \frac{\phi(\lambda_1/\sigma)}{\sigma}$ [using Theorem 1(ii)]\\
	\enspace If $\lambda_1 \leq \lambda_0 - \gamma$, set $1/P(Y) \leftarrow 0$, and do not compute $\partial P(Y)/{\partial \eta_j}$\\
	\qquad i.e. do not include a contribution from this outcome in the information matrix approximation
\end{tabbing}
To compute $\partial P(Y)/{\partial \eta_j}$:
\newcommand{\argmin}{\operatorname*{arg\,min}}
\begin{tabbing}
	\enspace Declare $\mathcal{Z}(j) = \{j \}$\\
	\enspace Set $C_4 = 1$\\
	\enspace Propose augmenting $\mathcal{Z}(j)$ to $\mathcal{Z}'(j) = \{ j, \argmin_{l\neq j} |\eta_j-\eta_l|\}$ \\
	\enspace Iterate until STOP. Given current proposal $\mathcal{Z}'(j)$:\\
	\qquad Calculate $I(j), J(j)$ for $\mathcal{Z}'(j)$, refer to as $I', J'$ respectively \\
	\qquad Set $C'_4 \leftarrow 
						C^{(1)}_{I',J'} + C^{(3)}_{I'-1,J'}
						\sum_{l\in \mathcal{S}_1 \cap \mathcal{Z}'(j)\backslash\{ j\}} \Delta_{lj} 
						- C^{(3)}_{I',J'-1}\sum_{l\in \mathcal{S}_0 \cap \mathcal{Z}'(j)\backslash\{ j\}} \Delta_{lj}$  \\
	\qquad If (A) $0 \leq C'_4 \leq C_4$ and (B) $|C'_4 - C^{(1)}_{I',J'}| \leq |C_4 - C^{(1)}_{I',J'}|$, accept proposal\\
	\qquad\qquad Update $C_4 \leftarrow C'_4$, $\mathcal{Z}(j) \leftarrow \mathcal{Z}'(j)$ \\
	\qquad If did not accept proposal in previous step, then STOP\\
	\qquad Otherwise make new proposal, $\mathcal{Z'}(j) \leftarrow \mathcal{Z}(j) \cup \argmin_{l \not\in \mathcal{Z}(j) } \{ |\eta_l - \eta_{j}| \} $\\
	\enspace Set $\mathcal{N}(j) \leftarrow \{ l : \eta_l < \eta_{l'}, \mbox{ for all } l' \in \mathcal{Z}(j)\}$\\
	\enspace Set  $\mathcal{P}(j) \leftarrow \{ l : \eta_l > \eta_{l'}, \mbox{ for all } l' \in \mathcal{Z}(j)\} $\\
	\enspace If  $\{ \mathcal{S}_1 \cap \mathcal{N}(j)\}  \cup\{ \mathcal{S}_0 \cap \mathcal{P}(j)\} = \emptyset$:\\
	\qquad Deem the outcome as quasi-increasing\\
	\qquad Set $ \frac{\partial P(Y)}{\partial \eta_j} 
					\leftarrow (2y_j -1)\max\left\{ 0,
			\frac{1}{\sigma} \phi\left(\frac{-\eta_j}{\sigma}\right) C_4  
			+ \frac{1}{\sigma^2}
				\phi'\left(\frac{-\eta_j}{\sigma}\right) C^{(2)}_{I(j),J(j)}  \right\}$ [using Theorem 2(i)] \\
	\enspace Else set $\frac{\partial P(Y)}{\partial \eta_j} \leftarrow 0$ [using Theorem 2(ii)]
\end{tabbing}
 \end{algo}}

\section*{Appendix 2: Proofs and further asymptotic results} 

Recall that $\mathcal{Z}(j)= \{ l: \eta_l - \eta_j \to 0 \}$, $\mathcal{N}(j) = \{ l : \eta_l - \eta_j \to -\infty\}$, $\mathcal{P}(j) = \{ l : \eta_l - \eta_j \to \infty \}$, and $\mathcal{S}_0=\{j: y_j=0\}$, $\mathcal{S}_1=\{ j:y_j=1\}$. For the asymptotic results, we require some assumptions repeated here for clarity.\begin{assumption}
$\beta_\text{att} = \beta/\sqrt{1+c^2 \sigma^2}$ is fixed as $\sigma^2 \to \infty$.
\label{assume:beta-att-fixed}
\end{assumption}
\begin{assumption}
For all $j=1,\ldots,m$, either $\eta^\ast_j = f^\T(x_j)\beta_\text{att}$ is fixed or there exists $l\neq j$ with $\eta^\ast_l$ fixed and $\eta^\ast_l - \eta^\ast_j = o(\sigma^{-1})$.
\label{assume:eta-star-condns}
\end{assumption} 
\begin{assumption}
There exists $A_j>0$ such that $|\eta_l -\eta_j| > \sigma A_j$ for $l \in \{\mathcal{S}_0\cap \mathcal{N}(j)\} \cup \{\mathcal{S}_1 \cap \mathcal{P}(j)\}$, and $B_j>0$ such that $|\eta_l - \eta_j| > \sigma B_j$ for all $l \in \{ \mathcal{S}_1 \cap \mathcal{N}(j) \}\cup\{  \mathcal{S}_0 \cap \mathcal{P}(j)  \}$. 
\end{assumption}
Define 
\begin{align*}
f_{1,j}(t,\sigma^2) &=  \prod_{l \in \mathcal{S}_1 \cap \mathcal{N}(j)} 
			h(\eta_l - \eta_j + t)
		\prod_{l \in \mathcal{S}_0  \cap \mathcal{P}(j)} 
			\{ 1 -h(\eta_l - \eta_j + t) \}\\
f_{2,j}(t,\sigma^2) &=  \prod_{l \in \mathcal{S}_1 \cap \mathcal{Z}(j) \backslash \{j\}}
			h(\eta_l - \eta_j + t)
				\prod_{l \in \mathcal{S}_0 \cap \mathcal{Z}(j) \backslash\{j\}}
			\{ 1 -h(\eta_l - \eta_j + t) \} \\
f_{3,j}(t,\sigma^2) & =  \prod_{l \in \mathcal{S}_1 \cap \mathcal{P}(j) }
		h(\eta_l - \eta_j + t) 	\prod_{l \in \mathcal{S}_0 \cap  \mathcal{N}(j) }
		\{ 1 -h(\eta_l - \eta_j + t) \}		
\end{align*}
We will mostly suppress the dependence of these functions on $j$ and write $f_1$, $f_2$, $f_3$ where the context is clear. 
Fix $t\in\mathbb{R}$. 
If  $\{ \mathcal{S}_1 \cap \mathcal{N}(j)\} \cup \{ \mathcal{S}_0 \cap \mathcal{P}(j)  \}\neq \emptyset$, then $f_1(t,\sigma^2)\to 0$ as $\sigma^2\to \infty$, otherwise $f_1(t,\sigma^2)=1$. We always have $f_3(t,\sigma^2)\to 1$. We make use of the following lemma.

\begin{lemma}
Suppose $f_3$ is as defined above, and $f_4(t,\sigma^2)$ is measurable as a function of $t$ for all fixed $\sigma^2$, with $0 \leq f_4(t,\sigma^2)\leq K$ for all $t,\sigma^2$, for some $K>0$. Then:  

(i) For any $\epsilon>0$, as $\sigma^2\to \infty$,
\begin{align*}
\int_{-\infty}^{\infty}
	 h'(t) f_3(t,\sigma^2) f_4(t,\sigma^2) dt &= 
\int_{-\infty}^{\infty} 
	h'(t) f_4(t,\sigma^2) dt + O(e^{-\sigma A_j /[(1+\epsilon)m]})  \,,
\end{align*}
i.e. replacing $f_3$ by 1 in the integrand incurs only an exponentially decaying error. 

(ii) Suppose that $\Delta_1,\Delta_2$ vary with $\sigma^2$, but $|\Delta_1|, |\Delta_2|\leq \Delta_{\operatorname{max}}$. Then, as $\sigma^2\to \infty$, for any $\epsilon>0$,
\begin{align*}
&\int_{-\infty}^{\infty}
	 h(t+\Delta_1)\{1-h(t+\Delta_2)\}
	  f_3(t,\sigma^2) f_4(t,\sigma^2) dt \\&= 
\int_{-\infty}^{\infty} 
	 h(t+\Delta_1)\{1-h(t+\Delta_2)\}	 
	 f_4(t,\sigma^2) dt + O(e^{-\sigma A_j /[(1+\epsilon)m]})  \,,
\end{align*}
i.e. the integrator, $h'(t)$, in (i) can be replaced by $h(t+\Delta_1)\{1-h(t+\Delta_2)\}$.
\end{lemma}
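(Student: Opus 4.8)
The plan is to reduce everything to elementary estimates for the logistic inverse link $h(t)=(1+e^{-t})^{-1}$, which satisfies $h'(t)=h(t)\{1-h(t)\}$, $\int_{-\infty}^{\infty}h'(t)\,dt=1$, and the exponential bounds $1-h(t)\le e^{-t}$ for $t\ge 0$, $h(t)\le e^{t}$ for $t\le 0$, so that $h'(t)\le e^{-|t|}$ for all $t$. Writing $g(t)=h'(t)$ in part~(i) and $g(t)=h(t+\Delta_1)\{1-h(t+\Delta_2)\}=h(t+\Delta_1)h(-t-\Delta_2)$ in part~(ii), the difference between the two sides of each claimed identity is $\int_{-\infty}^{\infty} g(t)\{f_3(t,\sigma^2)-1\}f_4(t,\sigma^2)\,dt$, whose modulus is at most $K\int_{-\infty}^{\infty} g(t)\,|1-f_3(t,\sigma^2)|\,dt$ since $0\le f_4\le K$. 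It then remains only to show this last integral is $O(e^{-\sigma A_j/[(1+\epsilon)m]})$.

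First I would quantify how close $f_3$ is to $1$. Each factor of $f_3$ is $h(\eta_l-\eta_j+t)$ with $l\in\mathcal{S}_1\cap\mathcal{P}(j)$, in which case $\eta_l-\eta_j>\sigma A_j$ by Assumption~3, or $1-h(\eta_l-\eta_j+t)$ with $l\in\mathcal{S}_0\cap\mathcal{N}(j)$, in which case $\eta_l-\eta_j<-\sigma A_j$; by monotonicity of $h$, in either case the factor is at least $h(\sigma A_j-|t|)=1-h(|t|-\sigma A_j)\ge 1-e^{-(\sigma A_j-|t|)}$ whenever $|t|<\sigma A_j$. Since $f_3$ is a product of at most $m$ factors, all lying in $[0,1]$, the inequality $1-\prod_i(1-a_i)\le\sum_i a_i$ yields $|1-f_3(t,\sigma^2)|\le m\,e^{-(\sigma A_j-|t|)}$ for $|t|<\sigma A_j$, and trivially $|1-f_3|\le 1$ everywhere.

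The final step is to split the integral at $|t|=T$ with $T=\sigma A_j/[(1+\epsilon)m]$. On $\{|t|\le T\}$ the bound just obtained gives $|1-f_3|\le m\,e^{-(\sigma A_j-T)}$, which is $O(e^{-\sigma A_j/[(1+\epsilon)m]})$ (indeed far smaller); since $\int_{-\infty}^{\infty}g(t)\,dt$ is a finite constant --- equal to $1$ when $g=h'$, and, in part~(ii), bounded uniformly over $|\Delta_1|,|\Delta_2|\le\Delta_{\max}$ --- the central contribution is of the required order. On $\{|t|>T\}$ I would use $|1-f_3|\le 1$ together with the tail decay of the integrator: $g(t)=h'(t)\le e^{-|t|}$ in part~(i), and $h(t+\Delta_1)h(-t-\Delta_2)\le e^{\Delta_{\max}}e^{-|t|}$ in part~(ii), so that $\int_{|t|>T}g(t)\,dt=O(e^{-T})=O(e^{-\sigma A_j/[(1+\epsilon)m]})$. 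Adding the two pieces gives the claim.

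I expect the only real care to be in this last step: choosing $T$ so that the ``$f_3$ close to $1$'' error on the central region and the ``$g$ small'' error on the tails are \emph{simultaneously} $O(e^{-\sigma A_j/[(1+\epsilon)m]})$, and absorbing the harmless polynomial-in-$m$ and $e^{\Delta_{\max}}$ prefactors. The one genuinely extra ingredient relative to part~(i) is the uniform bound on $\int_{-\infty}^{\infty}h(t+\Delta_1)\{1-h(t+\Delta_2)\}\,dt$ over $|\Delta_1|,|\Delta_2|\le\Delta_{\max}$, which is worth isolating because without it the conclusion of part~(ii) would be vacuous; it follows from the same logistic estimates by the substitution $t\mapsto t-\Delta_1$, after which the integrand is bounded by $\min(1,e^{s})\min(1,e^{2\Delta_{\max}}e^{-s})$, an integrable envelope independent of $\Delta_1,\Delta_2$.
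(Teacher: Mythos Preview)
Your argument is correct and slightly more elementary than the paper's. Both proofs reduce to bounding $\int g(t)\,|1-f_3(t,\sigma^2)|\,dt$ and both exploit the logistic tail bounds $h(s)\le e^{s}$, $1-h(s)\le e^{-s}$, but the decompositions differ. The paper splits the real line at $|t|=A_j\sigma$ and, on the central region, uses a deliberately \emph{weakened} step-function approximation $|h(s)-S(s)|\le L_m e^{-|s|/[(1+\epsilon)m]}$; the point of slowing the exponent is so that, after binomial expansion of the product defining $f_3$, each term $e^{l|t|/[(1+\epsilon)m]}$ with $l\le m$ remains integrable against $h'(t)\sim e^{-|t|}$. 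You instead keep the sharp bound $|1-f_3|\le m\,e^{-(\sigma A_j-|t|)}$ and split at the smaller radius $T=\sigma A_j/[(1+\epsilon)m]$, which makes the central term trivially small and pushes the $(1+\epsilon)m$ into the tail estimate for $g$. Your route avoids the binomial expansion and the somewhat delicate finiteness check on $\int e^{l|t|/[(1+\epsilon)m]}h'(t)\,dt$; the paper's route has the minor advantage of giving a clean reduction of part~(ii) to part~(i) via the pointwise inequality $h(t+\Delta_1)\{1-h(t+\Delta_2)\}\le 4e^{\Delta_{\max}}h'(t)$, whereas you rerun the tail estimate directly with $g(t)\le e^{\Delta_{\max}}e^{-|t|}$. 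Either way the rate $O(e^{-\sigma A_j/[(1+\epsilon)m]})$ falls out, and your isolation of the uniform bound on $\int h(t+\Delta_1)\{1-h(t+\Delta_2)\}\,dt$ is a useful observation.
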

 The key idea in the proof of Lemma 1 is to approximate the logistic function by a step function. Observe that if $h$ is the logistic function and $S(t) = \mathbb{I}(t>0)$, then there is $L>0$ such that $|h(t)-S(t)| \leq Le^{-|t|}$. Moreover, we can reduce the rate constant for the exponential and still have an upper bound. Thus, given $\epsilon>0$, $|h(t)-S(t)| \leq Le^{-|t|/(1+\epsilon)}$. 
 
 As a prelude to the proof of Lemma 1, we demonstrate exponential convergence of a relatively simple integral to zero. The full proof is more intricate, but does not involve many more ideas. Observe
 \begin{align*}
\left|
\int_{-\infty}^{\infty}
 [
 h(t+\sigma) - S(t+\sigma)] h'(t) dt
\right|
&\leq 
\int_{t+\sigma >0 }
	L e^{-\sigma/(1+\epsilon)-t/(1+\epsilon)} h'(t) dt
+ \int_{t+\sigma <0}
	h'(t)dt	\\
&\leq L e^{-\sigma/(1+\epsilon)} \int_{-\infty}^{\infty} e^{-t/(1+\epsilon)}h'(t) dt
	+ h(-\sigma)\\
	& = O(e^{-\sigma/(1+\epsilon)}) \,.
\end{align*}
 Key to the conclusion is the observation that the integral in the second line is finite. This is true since in the upper and lower tails the integrand is bounded, respectively, by $\lambda e^{-|t|\{1+1/(1+\epsilon)\}}$ and $\lambda e^{-|t|\{1-1/(1+\epsilon)\}}$, where $\lambda>1$. The integral is not finite if $\epsilon=0$.
 \begin{proof}[of Lemma 1] 
Part (i): Observe that
 \begin{align*}
f_3(t,\sigma^2)
		&= \prod_{l \in \mathcal{S}_0\cap \mathcal{N}(j)}
					 h(-(\eta_l-\eta_j+t))
					\prod_{l \in \mathcal{S}_1 \cap \mathcal{P}(j)}
					h(\eta_l -\eta_j +t)  \,.
\end{align*}
Assume $A_j\sigma + t >0$. Then, for $l \in \mathcal{S}_1 \cap \mathcal{P}(j)$, there is a constant $L_m>0$ such that
\begin{align*}
|h(\eta_l - \eta_j + t)- 1| &= 
|h(\eta_l - \eta_j + t) - S(\eta_l - \eta_j + t)|  \leq L_m e^{-|\eta_l - \eta_j + t|/((1+\epsilon)m)}\\
 & \leq L_m e^{-\sigma A_j /((1+\epsilon)m)-t/((1+\epsilon)m)}  \leq L_m e^{-\sigma A_j/((1+\epsilon)m) + |t|/((1+\epsilon)m)} \,.
\end{align*}
By a similar argument, $L_m$ can also be chosen such that, in addition, for $l \in \mathcal{S}_0\cap \mathcal{N}(j)$ and $t < A_j \sigma$, 
\begin{align*}
|h(-(\eta_l-\eta_j + t))-1| \leq L_m e^{-A_j\sigma/((1+\epsilon)m) + |t|/((1+\epsilon)m)} \,.
\end{align*}
Thus, for $-A_j\sigma < t <A_j \sigma$, 
\[
	 \prod_{l \in \{\mathcal{S}_0 \cap \mathcal{N}(j) \} \cup \{ \mathcal{S}_1 \cap \mathcal{P}(j) \}} 
	 \left\{ 
	 1- L_m e^{-\sigma A_j /((1+\epsilon)m)+|t|/((1+\epsilon)m)}
	 \right\} 
	 \leq f_3(t,\sigma^2) \leq 
		1\,.
\]
Let $\kappa = |\{ \mathcal{S}_0 \cap \mathcal{N}(j) \} \cup \{ \mathcal{S}_1\cap \mathcal{P}(j)\}|$, noting $\kappa \leq m$. Binomial expansion of the product yields a conservative bound, 
\begin{align*}
\left|\int_{-A_j\sigma}^{A_j\sigma} h'(t) \{ 1-f_3(t,\sigma^2) \} dt\right| &\leq
\sum_{l=1}^{\kappa} 
{ {\kappa}\choose{l} } L_m^l e^{-l\sigma A_j /((1+\epsilon)m)}  
\int_{-\infty}^{\infty} e^{ l|t|/((1+\epsilon)m)} h'(t)dt\\
&= O(e^{-\sigma A_j/((1+\epsilon)m)}) \,,
\end{align*}
as the integral on the right hand side is finite for $l\leq m$. Moreover, 
$$\left|\int_{A_j\sigma}^{\infty}  h'(t) \{ 1-f_3(t,\sigma^2) \} dt\right| \leq 1- h(A_j\sigma) = O(e^{-A_j\sigma})\,,$$ 
and similarly  
$$\left|\int_{-\infty}^{-A_j\sigma}  h'(t) \{ 1-f_3(t,\sigma^2) \} dt\right| \leq h(-A_j\sigma) = O(e^{-A_j\sigma})\,.$$ 
Overall, 
\[
\left| \int_{-\infty}^{\infty} h'(t) \{ 1- f_3(t,\sigma^2) \} dt \right| = O(e^{-\sigma A_j/((1+\epsilon)m)}) \,.
\]
When combined with the assumption $0\leq f_4(t,\sigma^2)\leq K$, this is adequate to prove the lemma.

Part (ii): First note that there exists $K'>0$ such that, for all $\Delta_1, \Delta_2$ with $|\Delta_1|, |\Delta_2| \leq \Delta_{\operatorname{max}}$,
\begin{equation}
h(t+\Delta_1)\{1-h(t+\Delta_2)\}  \leq K' h'(t) \,.
\label{eq:lem-deltas-Kdash}
\end{equation}
Now consider the case $f_4=1$, for which we have
\begin{align*}
& \int_{-\infty}^{\infty}
h (t+\Delta_1) \{ 1- h(t+\Delta_2) \}
 \{ 1- f_3(t,\sigma^2) \} dt\\
  &\leq 
 K'\int_{-\infty}^{\infty}
 h'(t) \{
 	1-f_3(t,\sigma^2)
 \} dt
=O(e^{-\sigma A_j/((1+\epsilon)m)}) \,,
\end{align*}
as established in part (i). The result for general $f_4$ holds via a similar argument to part (i). 

It can be seen that a conservative choice in \eqref{eq:lem-deltas-Kdash} above is $K' = 4\exp \Delta_{\operatorname{max}}$. To show this, note
$$
R := \frac{h(t+\Delta_1) \{1- h(t+\Delta_2)\} }{h'(t)} 
= \frac{e^{\Delta_1} (1+e^t)^2}{(1+e^{t+\Delta_1})(1+e^{t+\Delta_2})}
= \frac{e^{\Delta_1} (e^{-t}+1)^2 }{ (e^{-t}+e^{\Delta_1} ) (e^{-t} + e^{\Delta_2})} \,.
$$
For $t\geq 0$, use the final expression above to see that $R\leq e^{\Delta_1}\times 4 / e^{\Delta_1+\Delta_2} = 4e^{-\Delta_2} \leq 4e^{\Delta_{\max}}$. For $t<0$, considering the penultimate expression above we see $R \leq e^{\Delta_1}\times 4/1 \leq 4e^{\Delta_{\max}}$.
\end{proof}
 \begin{proof}[of Theorem 2 (Derivatives)]
  
Part (i): The derivative is given by
\begin{align}
\frac{\partial P(Y)}
{\partial \eta_j} 
&= 
(2y_j-1)
\int_{-\infty}^{\infty}
 h'(\eta_j+\sigma u)
	 \prod_{l \in \mathcal{S}_1\backslash \{j\}}
 			h(\eta_l + \sigma u)
	\prod_{l \in \mathcal{S}_0 \backslash\{j\}}
		\{ 1 -h(\eta_l + \sigma u) \}
		\phi( u ) du \notag \\
		&= \frac{ (2y_j-1)}{\sigma}
\int_{-\infty}^{\infty}
 h'(t) 
f_1(t,\sigma^2)f_2(t,\sigma^2)f_3(t,\sigma^2) 	\phi\left( \frac{t}{\sigma} - \frac{\eta_j}{\sigma}\right) dt \,,
	\label{eq:general-derivInt}
%
%
%
\end{align}
since, from Assumption \ref{assume:eta-star-condns}, $\mathcal{N}(j)\cup \mathcal{Z}(j)\cup \mathcal{P}(j)=\{1,\ldots, m\}$.
If $ \mathcal{S}_1 \cap \mathcal{N}(j)= \mathcal{S}_0 \cap \mathcal{P}(j) =\emptyset$, then $f_1=1$ and, from Lemma 1(i), \eqref{eq:general-derivInt} is equal to
\[
		\frac{ (2y_j-1)}{\sigma} \int_{-\infty}^{\infty}		
			h'(t) f_2(t,\sigma^2) \phi\left( \frac{t}{\sigma} - \frac{\eta_j}{\sigma}\right) dt  + 
			O\left(\frac{1}{\sigma}e^{-\sigma A_j/[(1+\epsilon)m]}\right)\,.
\]
Applying Taylor's theorem (to the normal density), we find an approximation correct to $O(\sigma^{-3})$:
\begin{align}
\frac{\partial P(Y)}
{\partial \eta_j} 
&= \frac{ (2y_j-1)}{\sigma}  
\left\{ \phi(-\eta_j/\sigma) \int_{-\infty}^{\infty}
h'(t) f_2(t,\sigma^2) dt 
 + \frac{\phi'(-\eta_j/\sigma)}{\sigma}
 \int_{-\infty}^{\infty}
t h'(t) f_2(t,\sigma^2)  dt \right. \Bigg\} \notag \\
& \quad\quad\quad\quad\quad\quad\quad\quad +O(\sigma^{-3}) 
+O\left(\frac{1}{\sigma}e^{-\sigma A_j/[(1+\epsilon)m]}\right)\,.
\label{eq:derivative-asymp-sigma-third-order}
\end{align}
We now expand $f_2$ in terms of $\Delta_{lj} = \eta_l-\eta_j$ to find a computationally simpler expansion. Recall that $I(j)=|\mathcal{S}_1\cap \mathcal{Z}(j)\backslash \{j \}|$, $J(j)=|\mathcal{S}_0 \cap \mathcal{Z}(j) \backslash \{ j \}|$, and note that
\begin{align}
f_2(t,\sigma^2) 
		&= h(t)^{I(j)}\{1-h(t)\}^{J(j)}
			+ \sum_{l\in \mathcal{S}_1 \cap \mathcal{Z}(j)\backslash\{ j\}}
				\Delta_{lj} h'(t)  h(t)^{I(j)-1} \{ 1- h(t) \}^{J(j)} \notag
				\\ & \quad\quad\quad\quad\quad\quad\quad
				- \sum_{l\in \mathcal{S}_0 \cap \mathcal{Z}(j)\backslash\{ j\}}
				\Delta_{lj} h'(t) h(t)^{I(j)}  \{ 1- h(t) \}^{J(j)-1}
			+ \sum_{l\in \mathcal{Z}(j)}O(\Delta_{lj}^2) \,.
\label{eq:derivative-f2-expand-delta}
\end{align}
Substituting \eqref{eq:derivative-f2-expand-delta} into \eqref{eq:derivative-asymp-sigma-third-order} gives the result.

Part (ii). Applying a similar argument to that in the proof of Lemma 1, to the function $f_1$ in the case $\{ \mathcal{S}_1 \cap \mathcal{N}(j) \}  \cup \{\mathcal{S}_0 \cap \mathcal{P}(j) \} \neq \emptyset$, shows that
 \[
 \int_{-\infty}^{\infty} h'(t) f_1(t,\sigma^2) f_4(t,\sigma^2) dt = O(e^{-\sigma B_j/(1+\epsilon)}) \,.
 \] 
Applying this to \eqref{eq:general-derivInt} above gives the result.
\end{proof}
 \begin{lemma}
 \label{lemma:suppmat-lem2}
 Let $f_5 (\eta, u)$ be a function, measurable as a function of $u$ for fixed $\eta$, satisfying \mbox{$0 \leq f_5(\eta,u) \leq K$}. Then:
\[
\int_{-\infty}^{\infty}  f_5(\eta,\sigma u) h(\eta+\sigma u) \phi(u) du = \int_{-\infty}^{\infty}  f_5(\eta,\sigma u) S(\eta+\sigma u) \phi(u) du + O(\sigma^{-1}) \,,
\]
where $S(t)=\mathbb{I}(t>0)$.
 \end{lemma}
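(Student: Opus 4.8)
The plan is to subtract the two integrals and show directly that the difference is $O(\sigma^{-1})$, exploiting the exponential closeness of the logistic function $h$ to the step function $S$ that was recorded just before the proof of Lemma~1. First I would write the difference as
\[
\int_{-\infty}^{\infty} f_5(\eta,\sigma u)\bigl\{ h(\eta+\sigma u) - S(\eta+\sigma u)\bigr\}\phi(u)\, du,
\]
and bound its absolute value, using $0\le f_5\le K$ and $\phi\ge 0$, by $K\int_{-\infty}^{\infty} \bigl|h(\eta+\sigma u)-S(\eta+\sigma u)\bigr|\,\phi(u)\,du$.

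Next I would invoke the estimate $|h(t)-S(t)|\le L e^{-|t|}$ for some $L>0$, so the difference is at most $KL\int_{-\infty}^{\infty} e^{-|\eta+\sigma u|}\phi(u)\,du$. The substitution $t=\eta+\sigma u$, $du=dt/\sigma$, together with the crude uniform bound $\phi\{(t-\eta)/\sigma\}\le (2\pi)^{-1/2}$, then gives
\[
KL\int_{-\infty}^{\infty} e^{-|\eta+\sigma u|}\phi(u)\,du \;\le\; \frac{KL}{\sigma\sqrt{2\pi}}\int_{-\infty}^{\infty} e^{-|t|}\,dt \;=\; \frac{2KL}{\sigma\sqrt{2\pi}} \;=\; O(\sigma^{-1}),
\]
which is exactly the claimed estimate, with the $S$-integral therefore differing from the $h$-integral by $O(\sigma^{-1})$.

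There is essentially no serious obstacle here; the only point requiring a little care is the change of variables, where it is important to bound the Gaussian density $\phi$ by its maximum value rather than to try to retain it inside the integral, since $\int e^{-|t|}\phi\{(t-\eta)/\sigma\}\,dt$ is no easier to control than $\int e^{-|t|}\,dt$ and the latter suffices. If a bound uniform in $\eta$ is wanted, one observes that the final estimate does not involve $\eta$ at all, so uniformity is automatic; and replacing $e^{-|t|}$ throughout by the sharper $e^{-|t|/(1+\epsilon)}$ from the excerpt changes only the numerical constant, which becomes $2(1+\epsilon)$.
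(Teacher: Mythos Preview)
Your proof is correct and follows essentially the same route as the paper: bound $|f_5|$ by $K$, change variables to $t=\eta+\sigma u$, and use the integrability of $|h(t)-S(t)|$ to extract the factor $1/\sigma$. The only cosmetic difference is that the paper handles the Gaussian via a first-order Taylor expansion of $\phi(t/\sigma-\eta/\sigma)$ about $-\eta/\sigma$, obtaining the leading constant $K\phi(-\eta/\sigma)\int|D(t)|\,dt/\sigma$ plus an $O(\sigma^{-2})$ remainder, whereas you use the cruder uniform bound $\phi\le(2\pi)^{-1/2}$; both are perfectly adequate for the $O(\sigma^{-1})$ conclusion.
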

 \begin{proof}[of Lemma 2]
 Note that
 \begin{align*}
 \left|
  \int_{-\infty}^{\infty} f_5(\eta,\sigma u)\left[ h(\eta+\sigma u) - S(\eta +\sigma u) \right] \phi (u) du \right|
&\leq K
	\frac{1}{\sigma} \int_{-\infty}^{\infty}
		 \left| h(t) - S(t) \right| \phi(t/\sigma - \eta/\sigma) dt
	 \\
&\leq \frac{ K \phi(-\eta/\sigma)}{\sigma} \int_{-\infty}^{\infty}|D(t)| dt + O(\sigma^{-2}) \,,
 \end{align*}
 where $D(t) = h(t)-S(t)$, by application of Taylor's theorem.
 \end{proof}
 \begin{proof}[of Theorem 1 (Probabilities)] Part (i): Observe
 \begin{align*}
P(Y) &= 
\int_{-\infty}^{\infty} \prod_{j \in \mathcal{S}_1} 
	h(\eta_j + \sigma u)
\prod_{j \in \mathcal{S}_0 } 
	\{1- h(\eta_j + \sigma u)\}
	\phi(u)du\\
&= \int_{-\infty}^{\infty} \prod_{j \in \mathcal{S}_1} 
	\mathbb{I}(\eta_j +\sigma u >0 )
\prod_{j \in \mathcal{S}_0 } 
	\mathbb{I}(\eta_j +\sigma u <0)
	\phi(u)du + O(\sigma^{-1}) \\
&= \int_{-\infty}^{\infty} 
		\mathbb{I}(
				\max_{j\in \mathcal{S}_0}\{\eta_j/\sigma \}
				< -u < 
			\min_{j\in \mathcal{S}_1}\{\eta_j/\sigma \}  ) 
		\phi(u) du + O(\sigma^{-1})\\
&= \max\{ 0, \Phi(\min_{j\in \mathcal{S}_1}\{\eta_j/\sigma \}) - \Phi( \max_{j\in \mathcal{S}_0} \{\eta_j/\sigma \} )\} + O(\sigma^{-1}) \,,
\end{align*}
where the second line follows by repeated application of Lemma \ref{lemma:suppmat-lem2}.\\[0.5ex]
Part (ii): 
By assumption, there exists $j'\in\mathcal{S}$ such that $\{\mathcal{S}_0 \cap \mathcal{P}(j')\} \cup \{\mathcal{S}_1 \cap \mathcal{N}(j')\} = \emptyset$, $|\mathcal{S}_0 \cap \mathcal{Z}(j')|\geq 1 $ and $|\mathcal{S}_1 \cap \mathcal{Z}(j')|\geq 1$. Thus, taking $l_1 \in \mathcal{S}_1\cap \mathcal{Z}(j')$, $l_2 \in \mathcal{S}_0 \cap \mathcal{Z}(j')$,
\begin{align*}
P(Y) &=
\frac{1}{\sigma}
\int_{-\infty}^{\infty}\Big[
 h(\Delta_{l_1 j'}+t)
\{ 1- h(\Delta_{l_2 j'}+ t) \} \\
 & \qquad\qquad\qquad
 \prod_{l\in\mathcal{S}_0\cap \mathcal{Z}(j') \backslash\{l_2\}}
 	\{ 1- h(\Delta_{lj'}+ t) \}
 \prod_{l \in\mathcal{S}_1\cap\mathcal{Z}(j') \backslash\{ l_1\}}
 h(\Delta_{lj'}+t) \\
 &\qquad\qquad\qquad\qquad\qquad
  f_{3,j'}(t,\sigma^2)
 \phi(t/\sigma - \eta_{j'}/\sigma)
 \Big]dt  \,.
\end{align*}
Since $\Delta_{l_1 j'}, \Delta_{l_2,j'}\to 0$, we have that $\Delta_{l_1 j'}$, $\Delta_{l_2 j'}$ are bounded. Thus, from Lemma 1(ii),
\begin{align*}
P(Y) &=
\frac{1}{\sigma}
\int_{-\infty}^{\infty}
 \prod_{l\in\mathcal{S}_0\cap \mathcal{Z}(j') }
 	\{ 1- h(\Delta_{lj'}+ t) \}
 \prod_{l \in\mathcal{S}_1\cap\mathcal{Z}(j') }
 h(\Delta_{lj'}+t) 
 \phi(t/\sigma - \eta_{j'}/\sigma)
 dt \\
 & \qquad\qquad\qquad+ O\left(\frac{1}{\sigma}e^{-\sigma A_{j'} /[(1+\epsilon)m]}\right) \,.
\end{align*}
This can be approximated using a Taylor expansion in $\Delta_{lj'}$ as
\begin{align*}
P(Y) &=
\frac{1}{\sigma}
\int_{-\infty}^{\infty}
 \prod_{l\in\mathcal{S}_0\cap \mathcal{Z}(j') }
 	\{ 1- h(t) \}
 \prod_{l \in\mathcal{S}_1\cap\mathcal{Z}(j') }
 h(t) 
 \phi(t/\sigma-\eta_{j'}/\sigma )
 dt \\
 & \qquad\qquad\qquad+ \sum_{l \in \mathcal{Z}(j')}O(\Delta_{lj'}/\sigma)+O\left(\frac{1}{\sigma}e^{-\sigma A_{j'} /[(1+\epsilon)m]}\right) \,.
\end{align*}
A formal argument using the mean value form of Taylor's theorem can be made to verify that the additional error incurred by the last step is indeed $\sum_{l\in\mathcal{Z}(j')}O(\Delta_{lj'}/\sigma)$. Applying Taylor's theorem to the normal density function yields
\begin{align}
P(Y) &=
\frac{1}{\sigma}
\int_{-\infty}^{\infty}
 \prod_{l\in\mathcal{S}_0\cap \mathcal{Z}(j') }
 	\{ 1- h(t) \}
 \prod_{l \in\mathcal{S}_1\cap\mathcal{Z}(j') }
 h(t) 
  \phi(- \eta_{j'}/\sigma) 
 dt \notag \\& \qquad \qquad \qquad + 
 \frac{1}{\sigma^2}
\int_{-\infty}^{\infty}
 \prod_{l\in\mathcal{S}_0\cap \mathcal{Z}(j') }
 	\{ 1- h(t) \}
 \prod_{l \in\mathcal{S}_1\cap\mathcal{Z}(j') }
 h(t) 
t \phi'(-\tilde{\eta}_t/\sigma) dt
 \notag \\
 & \qquad\qquad\qquad+ \sum_{l \in \mathcal{Z}(j')} O(\Delta_{lj'}/\sigma) + O\left(\frac{1}{\sigma}e^{-\sigma A_{j'}/[(1+\epsilon)m]}\right) \,,
 \notag
 \end{align}
 with $\tilde{\eta}_t$ between $\eta_{j'}$ and $\eta_{j'}-t$. Since $h'(t)=h(t)\{1-h(t)\}$, the second integral has the form $\int_{-\infty}^{\infty} h'(t) f_4(t,\sigma^2) dt$, with $f_4$ bounded, and so the overall remainder term is $O(\sigma^{-2})$.
 \end{proof}

\begin{proof}[ of Theorem 3] We show that, for all outcomes, 
\[
\left|\frac{\partial P(Y)}{\partial \eta_j}\right| / P(Y) \leq 2 \,.
\]
 Observe that both \mbox{$h(t),1-h(t) \geq (1/2)e^{-|t|}$} and $h'(t) \leq e^{-|t|}$. For $j \in \mathcal{S}_1$,
\begin{align}
P(Y) &= 
\int_{-\infty}^{\infty}
h(\eta_j + \sigma u)
 \prod_{l \in \mathcal{S}_1\backslash \{j\}} 
	h(\eta_l + \sigma u)
\prod_{ l \in \mathcal{S}_0\backslash \{j\} } 
	\{1- h(\eta_l+ \sigma u)\}
	\phi(u)du \notag \\
&   \geq (1/2)\int_{-\infty}^{\infty}
 e^{-|\eta_j+\sigma u|}
  \prod_{l \in \mathcal{S}_1\backslash \{j\}} 
	h(\eta_l + \sigma u)
\prod_{ l \in \mathcal{S}_0\backslash \{j\} } 
	\{1- h(\eta_l+ \sigma u)\}
	\phi(u)du \,, \notag
\end{align}
and the same lower bound holds for $j\in\mathcal{S}_0$.
Compare with the derivative,
\begin{align}
\left|\frac{\partial P(Y)}{\partial \eta_j}\right|& = \int_{-\infty}^{\infty}
h'(\eta_j +\sigma u)  \prod_{l \in \mathcal{S}_1\backslash \{j\}} 
	h(\eta_l + \sigma u)
\prod_{ l \in \mathcal{S}_0\backslash \{j\} } 
	\{1- h(\eta_l+ \sigma u)\}
	\phi(u)du \notag \\
	&\leq \int_{-\infty}^{\infty}
e^{-|\eta_j +\sigma u|} \prod_{l \in \mathcal{S}_1\backslash \{j\}} 
	h(\eta_l + \sigma u)
\prod_{ l \in \mathcal{S}_0\backslash \{j\} } 
	\{1- h(\eta_l+ \sigma u)\}
	\phi(u)du \,. \notag
\end{align}
Thus $\left|\frac{\partial P(Y)}{\partial \eta_j}\right|/P(Y) \leq 2$ and, in conjunction with Theorems 1 and 2, the theorem is proved.

\end{proof} 

Propositions 1 and 2 below give additional details of the behaviour of the random intercept logistic regression model for large $\sigma^2$.
\begin{proposition}
As $\sigma^2 \to \infty$
(i) the probability that the outcome in any given block is increasing is $1+O(\sigma^{-1})$; (ii) the probability that the outcomes in all blocks are increasing is $1+O(\sigma^{-1})$.
\end{proposition}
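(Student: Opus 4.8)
The plan is to establish both parts by an elementary union-bound estimate; part~(i) could alternatively be extracted from Theorem~1, but the direct route is shorter and needs nothing beyond $u\sim\textsc{mvn}(0,\sigma^2)$. The first step is the combinatorial observation that, by Definition~1, an outcome $Y$ in a block with linear predictors $\eta_1,\ldots,\eta_m$ fails to be increasing precisely when it contains an \emph{inversion}, i.e.\ a pair $l,j$ with $\eta_l<\eta_j$ yet $y_l=1$ and $y_j=0$. The verification is routine: given no inversion, no-inversion forces $a:=\max\{\eta_j:y_j=0\}\le\min\{\eta_l:y_l=1\}$, and then letting $j'$ attain $a$ (or, if all $y_j=1$, letting $j'$ attain $\min_j\eta_j$) one checks that $j'$ witnesses $Y$ as increasing; conversely any witness $j'$ is incompatible with an inversion.

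Next, by the union bound over the at most $\binom{m}{2}$ ordered pairs with $\eta_l<\eta_j$,
\[
\pr(Y\text{ not increasing}\mid\zeta)\;\le\;\sum_{\eta_l<\eta_j}\pr(y_l=1,\,y_j=0\mid\zeta).
\]
For a single term I would condition on the block random effect, giving $\pr(y_l=1,y_j=0\mid u)=h(\eta_l+u)\{1-h(\eta_j+u)\}$, and then use the pointwise inequality, valid since $\eta_l<\eta_j$ makes $h(\eta_j+u)\ge h(\eta_l+u)$,
\[
h(\eta_l+u)\{1-h(\eta_j+u)\}\;\le\;h(\eta_l+u)\{1-h(\eta_l+u)\}\;=\;h'(\eta_l+u).
\]
Integrating against the $N(0,\sigma^2)$ density, bounded pointwise by $(\sigma\sqrt{2\pi})^{-1}$, and using $\int_{-\infty}^{\infty}h'(t)\,dt=1$, yields $\pr(y_l=1,y_j=0\mid\zeta)\le(\sigma\sqrt{2\pi})^{-1}$, whence $\pr(Y\text{ not increasing}\mid\zeta)\le\binom{m}{2}(\sigma\sqrt{2\pi})^{-1}=O(\sigma^{-1})$, which is~(i). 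For~(ii), the outcomes in distinct blocks are independent, so one further union bound over the finitely many blocks of the design bounds the probability that some block is not increasing by (number of blocks)$\times\binom{m}{2}(\sigma\sqrt{2\pi})^{-1}$, again $O(\sigma^{-1})$.

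The argument is essentially routine; the only point needing care is uniformity of the per-pair bound in the $\eta_j$, because under Assumptions~1--2 the $\eta_j$ themselves grow like $\sigma$. This is exactly why the estimate is routed through $h'(\eta_l+u)$, whose integral is $1$ regardless of any translation, rather than through $h(\eta_l+u)$ alone, whose normal expectation is only $O(1)$; it is also why the bound, and hence the $O(\sigma^{-1})$ rate, holds with constants depending on $m$ (and the number of blocks) but not on the design points.
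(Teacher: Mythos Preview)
Your proof is correct and takes a genuinely different route from the paper's. The paper's argument for part~(i) orders the predictors $\eta_1\le\cdots\le\eta_m$, applies Theorem~1(i) to compute the probability of each of the $m{+}1$ increasing outcomes as $\Phi(-\eta_{j-1}/\sigma)-\Phi(-\eta_j/\sigma)+O(\sigma^{-1})$, and then sums to obtain a telescoping total of $1+O(\sigma^{-1})$. You instead characterize non-increasing outcomes combinatorially via the existence of an inversion pair, bound each pair's probability by $(\sigma\sqrt{2\pi})^{-1}$ using $h(\eta_l+u)\{1-h(\eta_j+u)\}\le h'(\eta_l+u)$ together with the trivial sup-bound on the normal density, and finish with a union bound. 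Your approach is more elementary and entirely self-contained---it avoids any appeal to Theorem~1 or Lemma~2---and yields an explicit constant $\binom{m}{2}/\sqrt{2\pi}$ that is uniform in the design, precisely because routing through $\int h'=1$ sidesteps any dependence on the (growing) $\eta_j$. The paper's approach, by contrast, reuses already-established asymptotics and exhibits a nice telescoping structure showing \emph{which} increasing outcome carries the mass for each range of~$u$. For part~(ii) the paper uses independence and a binomial expansion of $(1+O(\sigma^{-1}))^n$; your union bound over blocks gives the same conclusion and is if anything slightly cleaner.
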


\begin{proof} Consider a single block. Without loss of generality, we may assume the units in the block are ordered such that $\eta_1 \leq \ldots \leq \eta_m$. We define $\eta_0 = -\infty$, $\eta_{m+1} = \infty$ for convenience.
Then, the increasing outcomes are $(00\ldots0)$, $(00\ldots01)$,  $(00\ldots11)$, $\ldots$, $(11\ldots1)$. From Theorem 1, with $Y=(y_1,\ldots,y_m)^T \in \{0,1\}^m$, a within block outcome vector,
$$
P\{Y \text{ is increasing and first 1 occurs at }y_j \}
= \Phi( - \eta_{j-1}/\sigma) - \Phi(-\eta_j /\sigma) + O(\sigma^{-1})\,.
$$
Overall,
\begin{align*}
P\{ Y \text{ is increasing} \} &= \sum_{j=1}^{m+1} P\{Y \text{ is increasing and first 1 occurs at }j\text{th position} \} \\
&=
\sum_{j=1}^{m+1} [ \Phi( - \eta_{j-1}/\sigma) - \Phi(-\eta_j /\sigma)] + O(\sigma^{-1}) \\
&= \Phi(-\eta_0 /\sigma) - \Phi(-\eta_1/\sigma) + \Phi(-\eta_1/\sigma) -\Phi(-\eta_2/\sigma) \\
& \quad\quad+ \ldots -\Phi(-\eta_m/\sigma) + \Phi(-\eta_m/\sigma) - \Phi(-\eta_{m+1}/\sigma) + O(\sigma^{-1}) \\
& = \Phi(\infty) - \Phi(-\infty) + O(\sigma^{-1})\\
&= 1 + O(\sigma^{-1}) \,.
\end{align*}
By independence of blocks, the probability that the outcomes of all blocks are increasing is $(1+O(\sigma^{-1}))^n$. This equals $1 + nO(\sigma^{-1}) + O(\sigma^{-2}) = 1+O(\sigma^{-1})$, by binomial expansion.
\end{proof}
\begin{proposition} For any $\sigma>0$, if all blocks have increasing outcomes, then the parameters of the logistic model with fixed block effects and linear predictor
$$
\eta_{ij} = f^\T(x_{ij})\beta + \gamma_i \,,\quad i=1,\ldots,n; j=1,\ldots,m\,,
$$
are not estimable by maximum likelihood.
\end{proposition}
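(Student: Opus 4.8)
The plan is to reduce the claim to exhibiting a single direction $v$ in the parameter space of the fixed-block-effects model along which the log-likelihood never decreases, and to build such a $v$ directly from the increasing structure of the outcomes. Here $\beta$ and the treatments $x_{ij}$ are those of the data-generating model, and ``increasing in block $i$'' is meant as in the definition of an increasing outcome in Section~\ref{sec:strongdep}, with respect to the values $\lambda_{ij}=f^\T(x_{ij})\beta$. Regard the observed responses as a binary logistic regression with linear predictors $f^\T(x_{ij})\beta+\gamma_i$ and parameter $\vartheta=(\beta^\T,\gamma_1,\ldots,\gamma_n)^\T$. For a direction $v=(\delta^\T,\tau_1,\ldots,\tau_n)^\T$ the directional derivative of the log-likelihood at $\vartheta$ in the direction $v$ equals $\sum_{i,j}\{y_{ij}-h(f^\T(x_{ij})\beta+\gamma_i)\}\{f^\T(x_{ij})\delta+\tau_i\}$, and each summand has the $\vartheta$-independent sign $\operatorname{sign}(2y_{ij}-1)\operatorname{sign}\{f^\T(x_{ij})\delta+\tau_i\}$ and vanishes only when $f^\T(x_{ij})\delta+\tau_i=0$. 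Hence, if some nonzero $v$ satisfies $(2y_{ij}-1)\{f^\T(x_{ij})\delta+\tau_i\}\ge 0$ for every unit $(i,j)$, the directional derivative along $v$ is either strictly positive at every $\vartheta$, so no maximum likelihood estimate exists, or identically zero, so $v$ spans a direction along which $\vartheta$ is not identified; in both cases the parameters are not estimable. (This is the separation phenomenon of \citet{aa1984}.) It therefore suffices to construct such a $v$.

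For the construction I would take $\delta=\beta$, so that the increment to the linear predictor at $(i,j)$ is $\lambda_{ij}+\tau_i$. For each block put $a_i=\max\{\lambda_{ij}:y_{ij}=0\}$ and $b_i=\min\{\lambda_{ij}:y_{ij}=1\}$, with the conventions $a_i=-\infty$ if block $i$ has no zero response and $b_i=+\infty$ if it has no one response. Because block $i$ is increasing there is an index $j'$ with $y_{il}=0$ whenever $\lambda_{il}<\lambda_{ij'}$ and $y_{il}=1$ whenever $\lambda_{il}>\lambda_{ij'}$; taking contrapositives gives $\lambda_{il}\le\lambda_{ij'}$ for every zero response and $\lambda_{il}\ge\lambda_{ij'}$ for every one response, so $a_i\le\lambda_{ij'}\le b_i$ and the interval $[-b_i,-a_i]$ is non-empty. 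Choose any $\tau_i$ in it. Then $y_{ij}=1$ gives $\lambda_{ij}\ge b_i$ and $\tau_i\ge -b_i$, hence $\lambda_{ij}+\tau_i\ge 0$; while $y_{ij}=0$ gives $\lambda_{ij}\le a_i$ and $\tau_i\le -a_i$, hence $\lambda_{ij}+\tau_i\le 0$. In all cases $(2y_{ij}-1)\{f^\T(x_{ij})\delta+\tau_i\}\ge 0$, as needed.

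What remains is to check $v\ne 0$, which is automatic when $\beta\ne 0$; in the asymptotic regime where this proposition is applied (Assumption~1, with $\beta_\text{att}$ fixed and non-zero and $\beta=\beta_\text{att}\sqrt{1+c^2\sigma^2}$) we do have $\beta\ne 0$. I expect the only genuine obstacle to be the degenerate case $\beta=0$: then every outcome is trivially increasing, the construction collapses to $v=(0,-a_1,\ldots,-a_n)$, which is non-zero and the argument still applies provided some block is all-zeros or all-ones, and otherwise one simply notes that, when $f$ contains a constant regressor as in all our examples, the overall intercept and the block effects $\gamma_1,\ldots,\gamma_n$ are confounded, so the parameters are not estimable regardless. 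Beyond pinning down this boundary case the proof is short: record the directional-derivative identity (or invoke \citet{aa1984}), write down $v=(\beta^\T,\tau_1,\ldots,\tau_n)^\T$, and verify the sign conditions block by block as above.
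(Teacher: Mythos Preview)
Your proposal is correct and takes essentially the same separation argument as the paper: both exhibit the direction $(\beta^\T,-\tilde\eta_1,\ldots,-\tilde\eta_n)^\T$ with $\tilde\eta_i$ chosen between $\max\{\lambda_{ij}:y_{ij}=0\}$ and $\min\{\lambda_{ij}:y_{ij}=1\}$, the paper simply evaluating the likelihood along the ray $\lambda\mapsto\lambda(\beta,-\tilde\eta)$ and observing it tends to $1$, while you phrase the same fact via the sign of the directional derivative. Your treatment of the degenerate cases (ties among the $\lambda_{ij}$ giving only weak separation, and $\beta=0$) is more careful than the paper's, which tacitly assumes strict inequalities $\delta_{ij}\gtrless 0$.
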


\begin{proof}
The argument is essentially the same as for separation in the standard logistic model case. From the assumptions that the outcomes in each block are increasing, for each $i$ there exists $\tilde{\eta}_i$ such that
\begin{align*}
f^\T(x_{ij})\beta > \tilde{\eta}_i \iff y_{ij}=1 \\
f^\T(x_{ij})\beta < \tilde{\eta}_i \iff y_{ij}=0
\end{align*}
For $\lambda>0$, consider $\theta_\lambda= (\beta_\lambda,\gamma_\lambda) = (\lambda \beta, -\lambda \tilde{\eta})$. Let $\delta_{ij} = f^T(x_{ij})\beta - \tilde{\eta}_i$, and note that $\delta_{ij} > 0$ if $y_{ij}=1$ and $\delta_{ij}<0$ if $y_{ij}=0$. Then 
\begin{align*}
\Pr(  y | \theta_\lambda) = 
\prod_{i,j \,:\, y_{ij}=1} h(\lambda\delta_{ij})
\prod_{i,j \,:\, y_{ij}=0}
\{1 - h (\lambda \delta_{ij}) \}
\end{align*}
As $\lambda \to \infty$, $\Pr(y|\hat{\theta}_\lambda) \to 1$. Thus, given any set of finite parameter values (which must have likelihood less than 1), there is a $\hat{\theta}_\lambda$ that has higher likelihood. Thus there is no set of finite parameter values that maximize the likelihood.
\end{proof}

\section*{Appendix 3: Estimation of parameters for large $\sigma^2$} 
To assess the difficulty of estimating the fixed parameters for varying $\sigma$, for parameters $\beta_i\neq 0$ we examined the approximate relative error of estimation,
$$
\operatorname{sd}(\hat{\beta}_i)/|\beta_i|  \approx [M^{-1}_\beta(\xi^\ast ; \theta)]^{1/2}_{ii} / (\sqrt{n}|\beta_i|) \,,
$$
with the optimal design for each of the parameter combinations in Section 5.2. For $\beta_i=0$, we compared the standard deviation of $\hat{\beta}_i$ to the magnitude of the smallest nonzero parameter,
$$
\operatorname{sd}(\hat{\beta_i})/\min_{\{i : \beta_i \neq 0\}} |\beta_i| \,.
$$ 
These relative errors are plotted in Figure A1 above, with each colour corresponding to a different parameter scenario. 
We use relative errors as these are most appropriate when comparing estimation quality for parameter values of potentially quite different sizes. 

\begin{figure}
\begin{center}
\includegraphics[width=\linewidth]{./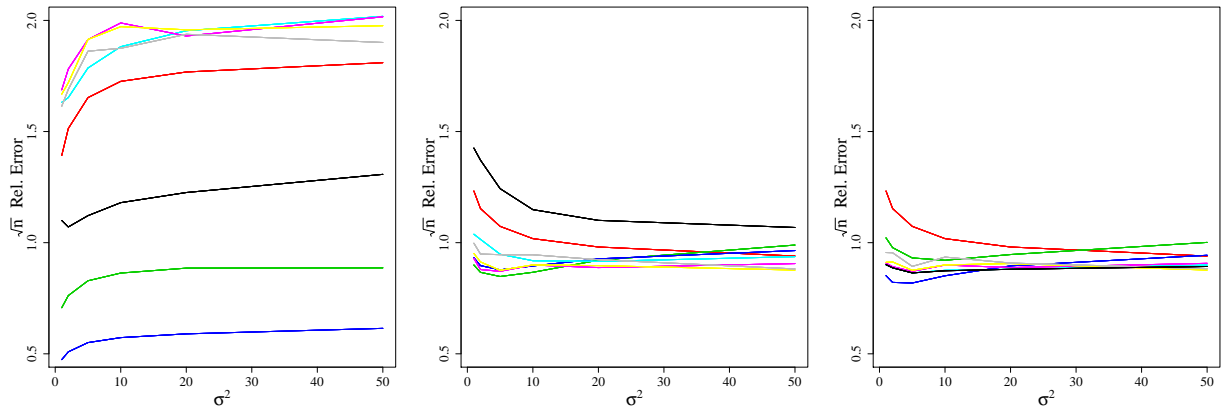}
\caption{Sample-size normalized approximate relative estimation error (for $\beta_i\neq 0 $, $[M^{-1}_\beta(\xi^\ast ; \theta)]^{1/2}_{ii} / \beta_i$), for varying $\sigma^2$. Relative errors on the same coloured line correspond to parameter scenarios with the same values of the marginal parameters. The first, second, and third panels correspond to $\beta_0$, $\beta_1$, and $\beta_2$, respectively.}
\end{center}
\end{figure}

We see that, for comparable values of the marginal parameters, the relative errors for $\beta_1$ and $\beta_2$ tend to decrease or remain approximately the same as $\sigma$ increases. For these parameters, therefore, the same level of estimation precision may be achieved for large $\sigma$ with no additional experimental units or, in some cases, up to 40\% fewer units. The relative error for $\beta_0$ increases with $\sigma$ by 18--30\% in our examples. Hence for the largest $\sigma$, 39--69\% more experimental units are needed to maintain the same level of estimation precision in $\beta_0$. However, $\beta_0$ is often the parameter of least interest. These sample size considerations make clear that useful experimentation remains possible for large $\sigma$ though, of course, detailed results for particular applications may vary.

\bibliographystyle{agsm}
\bibliography{../biom-refs.bib}
\vspace{-5cm}
\end{document}